\newtheorem{observation}{Observation}
\journalname{Autonomous Agents and Multi-Agent Systems}
\newcommand{\rev}[1]{{\color{blue} #1}}
\newcommand{\rev}[1]{#1}
\newcommand{\related}[1]{\emph{#1}}
\newcommand{\range}[2]{\in\{#1,\dots,#2\}}
\title{Redividing the Cake
\footnote{
A preliminary version of this paper was presented in the 27th International Joint Conference on Artificial Intelligence, IJCAI \citep{segalhalevi2018redividing}.
The following are new in the present paper.
(a) Theorem \ref{thm:rectilinear+}, which generalizes Theorem \ref{thm:rectangle+} from a rectangle to any rectilinear polygon. 
(b) Section \ref{sec:pof-bounds}, which applies the algorithms for cake redivision to obtain upper bounds on the price of fairness in cake-cutting with geometric constraints.
(c) The proof of Lemma \ref{lem:convexity} now uses a recently-introduced algorithm by \citet{cseh2020complexity} to obtain an algorithm with run-time polynomial in the binary representation of the input.
(d) Theorem \ref{thm:connected+} now uses an improved algorithm, which attains $1/2$ proportionality instead of $1/3$-proportionality.
Similarly, the constant in Theorem \ref{thm:rectangle+} is $1/3$ instead of $1/4$,
and the constant in Theorem \ref{thm:convex+} is $1/4$ instead of $1/5$.
}
}
\author{
Erel Segal-Halevi
}
\institute{
Ariel University, 
Ariel 40700, Israel.
\email{erelsgl@gmail.com}
}
\def\FULLVERSION{}
\newtheorem{remark*}{Remark}
\newtheorem{lemma*}{Lemma}
\newcommand{\text@hyphens}{\mathcode`\-=`\-\relax}
\newcommand{\id}[1]{\ensuremath{\mathit{\text@hyphens#1}}}
\begin{document}

\maketitle

\begin{abstract}
The paper considers fair allocation of resources that are already allocated in an unfair way. This setting requires a careful balance between the fairness considerations and the rights of the present owners.

The paper presents re-division algorithms that attain various trade-off points between fairness and ownership rights, in various settings differing in the geometric constraints on the allotments: (a) no geometric constraints; (b) connectivity---the cake is a one-dimensional interval and each piece must be a contiguous interval; (c) rectangularity---the cake is a two-dimensional rectangle or rectilinear polygon and the pieces should be rectangles; (d) convexity---the cake is a two-dimensional convex polygon and the pieces should be convex.

These re-division algorithms have implications on another problem: the price-of-fairness---the loss of social welfare caused by fairness requirements. Each algorithm implies an upper bound on the price-of-fairness with the respective geometric constraints.
\end{abstract}

\keywords{Cake-cutting \and Land reform \and Dynamic fair division \and Computational Geometry \and Two-dimensional resource allocation}

\section{Introduction}
Most theoretical works on fair resource allocation consider a one-shot division: the resource is divided once and for all, like a cake that is divided and eaten soon after it comes out of the oven. But in practice, it is often required to re-divide an already-divided resource
(see subsection \ref{sub:dynamic}). One example is a cloud-computing environment, where new agents come and require resources held by other agents. A second example is fair allocation of radio spectrum among several broadcasting agencies: it may be required to re-divide the frequencies to accommodate new broadcasters. A third example is land-reform: large land-estates are held by a small number of landlords, and the government may want to re-divide them to landless citizens. 

In the classic one-shot division setting, there are $n$ agents with equal rights, and the goal is to give each agent a fair share of the cake. A common definition of a ``fair share'' is a piece worth at least $1/n$ of the total cake value, according to the agent's personal valuation function. This fairness requirement is usually termed \emph{proportionality}. When proportionality cannot be attained, it is often (see subsection \ref{sub:partial}) relaxed to \emph{$r$-proportionality}, where $r\in(0,1)$ is a constant independent of $n$, which means that each agent receives at least a fraction $r/n$ of the total.

In contrast, in the re-division setting, there is an existing allocation of the cake among the $n$ agents. This allocation is not necessarily fair; in particular, there may be some agents who do not have any cake. When the cake is re-divided, it may be required to give extra rights to current holders. In particular, it may be required to give each agent the opportunity to keep a substantial fraction of their current value. This may be due either to efficiency reasons (in the cloud computing scenario) or economic reasons (in the radio spectrum scenario) or political reasons (in the land-reform scenario). 
This requirement will be called \emph{ownership}. Given a constant $w\in(0,1)$, 
\emph{$w$-ownership} means that each agent receives at least $w$ times their old value.
What levels of proportionality and ownership can be attained simultaneously? 

\subsection{\textbf{Results: Redivision}}
The first two results (in Section \ref{sec:general}) provide a tight answer to this question.

\begin{proposition}
\label{thm:general-}
For every constants $r,w\in[0,1]$ where $r+w > 1$, it may be impossible to simultaneously guarantee $r$-proportionality and $w$-ownership.
\end{proposition}

\begin{theorem}
\label{thm:general+}
For every 
constants 
$r,w\in[0,1]$ where $r+w\leq 1$, and for every existing allocation of the cake, there exists a division that simultaneously satisfies $r$-proportionality and $w$-ownership. 
Moreover, 
when $r,w$ are  rational numbers, such a division 
can be found using $O(n^2\operatorname{len}(r))$ queries, where $\operatorname{len}(r)$ denotes binary representation length.
\end{theorem}
As an example, taking $r=w=1/2$, it is possible to re-divide the cake, giving each agent at least half their previous value, while simultaneously giving each agent at least $1/(2n)$ of the total cake value. 

The parameters $r,w$ represent the level of balance between two principles: large $r$ means more emphasis on fairness while large $w$ means more emphasis on ownership rights.
The above theorems imply that the re-dividers (e.g. the government)
may choose any level of
fairness and ownership-rights that fit their ideological, political or economic goals, as long as the sum of these fractions is at most 1.

The balance parameters can also be given probabilistic interpretation. Suppose the government wants to do a land reform and needs the agreement of the current landowners. Naturally, the current landowners do not want to give away their lands. However, they may fear that, without land-reform, the landless citizens might revolt and they might lose all their lands. If the landowners believe that the probability of a successful revolt is $1-w$, then they may agree to a land-reform that guarantees $w$-ownership. Theorem \ref{thm:general+} implies that, in this case, it is possible to carry out a land-reform that guarantees $(1-w)$-proportionality.

While Theorem \ref{thm:general+} is encouraging, it ignores an important aspect of practical division problems: geometry. The division it guarantees may be highly fractioned, giving each agent a large number of disconnected pieces. In many practical division problems, e.g. when the resource to divide is time, the agents may need to receive a single connected piece rather than a large number of disconnected ones. Can partial-proportionality and partial-ownership be attained simultaneously with a connectivity constraint? The following proposition (proved in Section \ref{sec:connected}) answers this question negatively.
\begin{proposition}
\label{thm:connected-}
When the cake is a 1-dimensional interval and each piece must be an interval, for every positive constants $r,w\in(0,1)$, it may be impossible to simultaneously satisfy $r$-proportionality and $w$-ownership. 

Moreover, for every $r>0$ and every integer $d\in[n]$, there might be $d$ agents who, in any $r$-proportional division, receive at most a fraction $1/\lfloor{\frac{n}{d}}\rfloor$ of their old value.
\end{proposition}
The latter part of the proposition involves a fairness property much weaker than proportionality, that can be termed 
\emph{positivity}---guaranteeing each agent a piece with a positive value. 
With the connectivity constraint, even this weak fairness requirement is incompatible with $w$-ownership for every constant $w>0$: a positive division might require to give one agent at most $1/n$ of their previous value, give two agents at most $2/n$ of their previous value, give $n/3$ agents at most $1/3$ of their previous value, etc.

Proposition \ref{thm:connected-} motivates the following weaker ownership requirement: for every $d$, at least $n-d$ agents receive at least a fraction $1/\lfloor{n\over d}\rfloor$ of their old value. For example (taking $d=n/3$ and assuming all quotients are integers), at least $2n/3$ agents should receive at least $1/3$ of their old value.
This criterion is inspired by the ``90th percentile'' criterion common in Service-Level-Agreements and Quality-of-Service analysis, e.g. \citep{Zhang2014SMiTe,Delimitrou2014Quasar}. It can also be justified by political reasoning: in a democratic country, it may be sufficient to win the support of a sufficiently large majority.

The following results almost match this relaxed ownership criterion. Formally, \rev{let us define the} \textbf{democratic ownership} property as follows: for every integer $d\range{1}{n-1}$, at least  $n-d$ agents receive \rev{more than} a fraction $1/\lceil{n\over d}\rceil$ of their previous value.
\rev{Democratic-ownership corresponds to the best guarantee one could hope for given} Proposition \ref{thm:connected-}; the only difference is that 
in the upper bound the fraction is rounded down  ($1/\lfloor{n\over d}\rfloor$) while in democratic-ownership the fraction is rounded up.

\begin{theorem}
\label{thm:connected+}
When the cake is a 1-dimensional interval 
and each piece must be an interval, for every existing allocation of the cake, 
it is possible to find in time $O(n^2 \log{n})$ 
a division simultaneously satisfying democratic-ownership and $1/2$-proportionality.
\end{theorem}
It is an open question whether democratic-ownership is compatible with $r$-proportionality for some constant $r>1/2$.

Theorem \ref{thm:connected+}, like most works in cake-cutting, assumes that the cake is 1-dimensional. In realistic division scenarios, the cake is often 2-dimensional and the pieces should have a pre-specified geometric shape, such as a rectangle or a convex polygon. Rectangularity and convexity requirements are sensible when dividing land, exhibition space in museums, advertisement space in newspapers and even virtual space in web-pages. Moreover, in the frequency-range allocation problem, it is possible to allocate frequency ranges for a limited time-period; the frequency-time space is two-dimensional and it makes sense to require that the ``pieces'' are rectangles in this space \citep{Iyer2009Procedure}. 

2-dimensional cake-cutting introduces new challenges over the traditional 1-dimensional setting. As an example, in one dimension, it can be assumed that the initial allocation is a partition of the entire cake; this is without loss of generality, since any ``blank'' (unallocated part) can be attached to a neighboring allocated interval without harming its shape or value. However, in two dimensions, the initial allocation might contain blanks that cannot be attached to any allocated piece due to the rectangularity or convexity constraints. For example, 
suppose the cake is the large rectangle in Figure \ref{fig:disposal}.
\begin{figure}
\begin{center}
\includegraphics[width=.45\columnwidth]{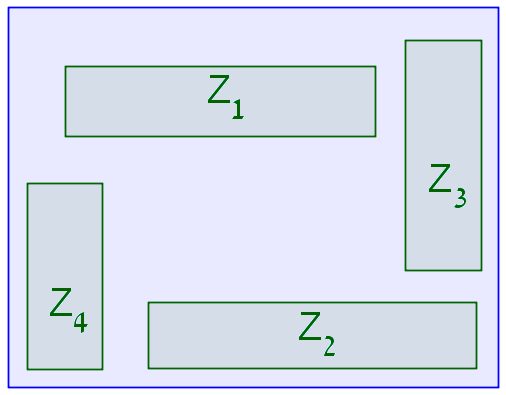}
\end{center}
\caption{
\label{fig:disposal}
With geometric constraints, an efficient allocation might leave some cake unallocated.
~~~
{\footnotesize All figures were made with GeoGebra 5 \citep{GeoGebra5}.}
}
\end{figure}
There are 4 agents and each agent $i$ has positive value-density only inside the rectangle $Z_i$. The most reasonable division (e.g. the only Pareto-efficient division) is to give each $Z_i$ entirely to agent $i$. But, this allocation leaves a blank in the center of the cake, and this blank cannot be attached to any allocated piece due to the rectangularity constraint. 
This counter-intuitive scenario cannot happen in a one-dimensional cake. Handling such cases requires new geometry-based tools. With such tools, the redivision problem can be solved in two common 2-dimensional settings (Section \ref{sec:rectangle}):
\begin{theorem}
\label{thm:rectangle+}
When the cake is a rectangle 
and each piece must be a parallel rectangle,
for every existing allocation of the cake, 
it is possible to find in time $O(n^2 \log{n})$ 
a division simultaneously satisfying democratic-ownership and $1/3$-proportionality.
\end{theorem}
\begin{theorem}
\label{thm:convex+}
When the cake is a 2-dimensional convex polygon 
and each piece must be convex,
for every existing allocation of the cake, 
there exists a division simultaneously satisfying democratic-ownership and $1/4$-proportionality.
\end{theorem}
\begin{remark}
In the interval, rectangle and convex settings, the geometric constraints are mostly harmless without the ownership requirement: when the cake is an interval/rectangle/convex, classic algorithms for proportional cake-cutting, such as \citet{Even1984Note}, can be easily made to return interval/rectangle/convex pieces by ensuring that the cuts are parallel. Similarly, the ownership requirement is easy to satisfy without the geometric constraints, as shown by Theorem \ref{thm:general+}. It is the combination of these two requirements that leads to interesting challenges.
\end{remark}

\rev{
Most land-estates are not exact rectangles, but they can be approximated by a}
\emph{rectilinear polygon}---a polygon in which all angles are $90^\circ$ or $270^\circ$. 
The next result generalizes Theorem \ref{thm:rectangle+} to a rectilinear polygonal cake.
The complexity of a rectilinear polygon is characterized by the number of its \emph{reflex vertices}---vertices with a $270^\circ$ angle. Denote this number by  $T$. A rectangle---the simplest rectilinear polygon---has $T=0$. The cake in Figure \ref{fig:rectilinear} has $T=4$.
\begin{figure}
\begin{center}
\includegraphics[scale=.6]{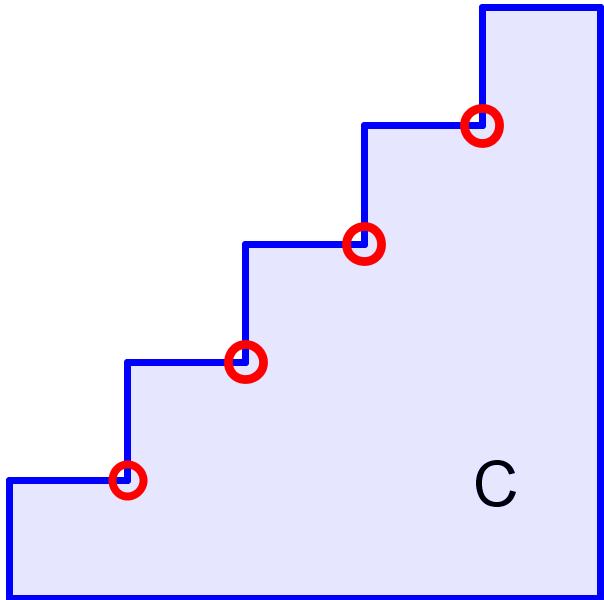}
\end{center}
\caption{\label{fig:rectilinear}
A rectilinear polygon with $T=4$ reflex vertices (circled).}
\end{figure}
\begin{theorem}
\label{thm:rectilinear+}
When the cake is a rectilinear polygon with $T$ reflex vertices,
and each piece must be a rectangle,
for every existing allocation of the cake, 
it is possible to find in time $O(n^2 \log{n}+poly(T))$
a division satisfying democratic-ownership, in which each agent receives at least $1/(3 n+T)$ of the total cake value.%
\end{theorem}
The dependence on $T$ is necessary: even without ownership requirements, there are instances in which it is impossible to guarantee a fraction of more than $1/(n+T)$ to all $n$ agents \citep{segalhalevi2020multicake}.

\subsection{\textbf{Results: Price of Fairness}}
Redivision algorithms can be used not only to compromise between old and new agents, but also to compromise between fairness and efficiency. Often, the most economically-efficient allocation is not fair, while a fair allocation is not economically-efficient. The trade-off between fairness and efficiency is quantified by the \emph{price-of-fairness} \citep{Bertsimas2011Price,Bertsimas2012EfficiencyFairness,Caragiannis2012Efficiency,Aumann2015Efficiency}. It is defined as the worst-case ratio of the maximum attainable social-welfare to the maximum attainable social-welfare of a fair allocation. The social welfare is usually defined as the arithmetic mean of the agents' values (also called \emph{utilitarian welfare}) or their geometric mean (also called \emph{Nash welfare}; see  \citet{Moulin2004Fair}). 

A redivision algorithm can be used to calculate an upper bound on the price of fairness in the following way. Take a welfare-maximizing allocation as the initial allocation; use a redivision algorithm to produce a partially-proportional allocation in which the utility of each agent is close to their initial utility; conclude that the new welfare is close to the initial (maximal) welfare.

Without geometric constraints, the following is an upper bound on the price-of-fairness w.r.t. utilitarian welfare.%
\footnote{
The price-of-fairness of $r$-proportionality  w.r.t. the Nash welfare is $1$ for all $r\leq 1$, since any Nash-optimal cake allocation is proportional \citep{segalhalevi2018monotonicity}.
}
\begin{theorem}
	\label{thm:general-pof}
	For every $r\in[0,1]$, the utilitarian price of $r$-proportionality is at most $1/(1-r)$.
\end{theorem}
When $r=1$ the above bound is infinite, and indeed, the price of 1-proportionality in this setting is $\Theta(\sqrt{n})$, which is not bounded by any constant
\citep{Caragiannis2012Efficiency}.  Theorem \ref{thm:general-pof} shows that a small compromise on the level of proportionality allows a constant (independent of $n$) bound on the utilitarian-price. The parameter $r$ sets the level of trade-off between fairness and efficiency.

With geometric constraints, the following upper bounds are proved:
\begin{theorem}
	\label{thm:interval-pof}
	When the cake is an interval 
	and each piece must be an interval, 
	for every $r\leq 1/2$:
	\begin{itemize}
		\item The utilitarian-price of $r$\-/proportionality is $O(\sqrt{n})$;
		\item The Nash-price of $r$\-/proportionality is at most $5.6$.
	\end{itemize}
\end{theorem}
\begin{theorem}
	\label{thm:rectangle-pof}
	When the cake is a rectangle 
	and each piece must be a rectangle,
	for every $r\leq 1/3$:
	\begin{itemize}
		\item The utilitarian-price of $r$\-/proportionality is $O(\sqrt{n})$;
		\item The Nash-price of $r$\-/proportionality is at most $8.4$.\end{itemize}
\end{theorem}
\begin{theorem}
	\label{thm:convex-pof}
	When the cake is convex polygon 
	and each piece must be convex,
	for every $r\leq 1/4$:
	\begin{itemize}
		\item The utilitarian-price of $r$\-/proportionality is $O(\sqrt{n})$;
		\item The Nash-price of $r$\-/proportionality is at most $11.2$.
	\end{itemize}
\end{theorem}

Note that the first claim in Theorem \ref{thm:interval-pof} is subsumed by \citet{Aumann2015Efficiency}, who prove that the utilitarian-price of 1-proportionality in this setting is $\Theta(\sqrt{n})$. It is brought here only for completeness. The second claim in that theorem, as well as the following theorems regarding two-dimensional constraints, are not implied by previous results.

\rev{
Appendix \ref{sec:pof-lower} partially complements the above results by showing some lower bounds on the price of fairness with interval cake and interval pieces:
\begin{itemize}
\item With two agents, for all $r\in[0,1]$, the utilitarian price of $r$-proportionality is $1+r/2$ and the Nash price of $r$-proportionality 
is $\max(1, \sqrt{2r})$.
\item With $n$ agents, there is a lower bound on the Nash price of proportionality, which approaches $2$ as $n\to\infty$.
\end{itemize}
 Computing the exact utilitarian price and Nash price of $r$-proportionality for any $n\geq 2$ and $r\leq 1$ in this setting remains an open question.
}

\rev{
\begin{remark}
A third measure of welfare is the \emph{egalitarian welfare}, defined as the minimum of the agents' values (normalized such that the total cake value is the same for all agents). 
The egalitarian price of $r$\-/proportionality is $1$ for all $r\leq 1$ whenever an $r$\-/proportional allocation exists. 
This is because, whenever an $r$\-/proportional allocation exists, its egalitarian welfare is at least $r/n$ of the total cake value. Therefore, in any egalitarian-optimal allocation, the egalitarian welfare is at least $r/n$ of the total cake value.
By definition, any such allocation is $r$-proportional. 
So the maximum attainable egalitarian welfare in an $r$-proportional allocation equals the maximum attainable egalitarian welfare overall.
\end{remark}
}

\ifdefined\algorithmForConvexCake
All upper bounds in this paper use redivision algorithms
(In fact, the original motivation for developing the redivision algorithms was to prove upper bounds on price-of-fairness).
Hence, the upper bounds are constructive in the following sense: their proofs can be used to convert any existing unfair division to a fair division such that the reduction in social welfare is upper-bounded. 
For example, Theorem \ref{thm:interval-pof} can be used to convert any Nash-optimal interval allocation which is not proportional, to a $1/2$\-/proportional interval allocation in which the Nash social welfare is at least $1/5.6$ of the optimum.
\fi

\section{Model}
\label{sec:model}
\subsection{\textbf{Cake Division}}
The \emph{cake} $C$ is a polytope in the $d$-dimensional Euclidean plane $\mathbb{R}^d$. This paper focuses on the common cases in which $d=1$ and $C$ is an interval, or $d=2$ and $C$ is a polygon. A \emph{piece} is a Borel subset of $C$; usually an interval or a polygon.

$C$ has to be divided among $n\geq1$ \emph{agents}. 
We denote by $[n]$ the set of integers $\{1,\ldots,n\}$.
Each agent $i\in[n]$ has a \emph{value-density} function $v_i$, which is an integrable, non-negative and bounded function on $C$. The \emph{value} of a piece $X_i$ to agent $i$ is marked by $V_{i}(X_i)$ and it is the integral of its value-density:
$
V_i(X_i)=\int_{x\in X_i}v_i(x)dx
$.
The definition implies that the $V_i$ are finite measures and are absolutely-continuous with respect to the Lebesgue measure, i.e., any piece with zero area has zero value to all agents. 

Division algorithms access the value measures via \emph{queries} \citep{Robertson1998CakeCutting,Woeginger2007Complexity}: an \emph{eval query} asks an agent to report the value of a specified piece of cake; a \emph{mark query} asks an agent to mark a piece of cake with a specified value.%
\footnote{
\rev{It is often called a \emph{cut query}, but the term \emph{mark query} better differentiates query answers from actual cuts through the cake.}
}
The present paper ignores strategic considerations and assumes that agents answer truthfully. Indeed, in general it may be impossible to build a cake-cutting algorithm that is both fair and strategy-proof \citep{Branzei2015Dictatorship}.

The geometric constraints, if any, are represented by a pre-specified family $S$ of \emph{usable pieces}. In this paper, $S$ will either be the set of all pieces (which means that there are no geometric constraints), or the set of all intervals, or the set of all rectangles, or the set of all convex pieces. It is assumed that each agent can use only a single piece from the family $S$.

An \emph{allocation} is a vector of $n$ pieces, $\mathbf{X}=(X_{1},\dots,X_{n})$, one piece per agent, such that the $X_{i}$ are pairwise-disjoint and 
$X_1\sqcup \cdots \sqcup X_n \subseteq C$.%
\footnote{
The symbol $\sqcup$ denotes disjoint union---it emphasizes that the pieces $X_1,\ldots,X_n$ are pairwise-disjoint.
}
 Note that some cake may remain unallocated, i.e, \emph{free disposal} is assumed. As illustrated in the introduction, free disposal may be necessary when there are geometric constraints. An \emph{$S$-allocation} is an allocation in which all pieces are usable, i.e, $X_i\in S$ for each agent $i$.

For every constant $r\in [0,1]$, an allocation $\mathbf{X}$ is called \emph{$r$-proportional} if every agent receives at least $r/n$ of the total cake value:
\begin{align*}
\text{For all~} i\in[n]: &&V_i(X_i) \geq {(r/n)}\cdot V_i(C)
\end{align*}
A 1-proportional division is also known as \emph{proportional}. 

\subsection{\textbf{Cake Redivision}} 
There is an existing $S$-allocation of the cake, $Z_1 \sqcup \dots \sqcup Z_n \subseteq C$. It is assumed that the old pieces $Z_j$ are pairwise-disjoint and that $Z_j\in S$ for all $j$, but nothing else is assumed on the division. In particular, the initial division is not necessarily proportional, and some of $C$ may be unallocated.

It is required to construct a new $S$-allocation
$X_1 \sqcup \cdots \sqcup X_n \subseteq C$. 
The re-allocation satisfies the \emph{$w$-ownership} property, for some constant $w\in(0,1)$, if every agent receives 
at least a fraction $w$ of their old value:

\begin{align*}
\text{For all~} j\in[n]:
&& 
V_j(X_j) \geq w \cdot V_j(Z_j)
\end{align*}
Since $w$-ownership is not always compatible with $r$-proportionality for any constant $r>0$, the following weaker property is defined. A re-allocation $\mathbf{X}$ satisfies the \emph{democratic-ownership} property if, for every $d\range{1}{n-1}$, there are at least $n-d$ agents $j\in[n]$ for whom

\begin{align*}
&&V_j(X_j) > \frac{1}{\lceil n/d\rceil} \cdot V_j(Z_j).
\end{align*}

\subsection{\textbf{Social Welfare and Price-of-Fairness}}
In addition to fairness, it is often required that a division has a high \emph{social welfare}. The social welfare of an allocation is a certain aggregate function of the normalized values of the agents (the normalized value is the piece value divided by the total cake value). 
Common social welfare functions are 
sum (utilitarian) and product (Nash), see \cite{Moulin2004Fair}. 
\rev{When calculating the welfare, it is convenient to normalize the values such that
the proportional share of an agent corresponds to a value of $1$
(so  receiving the entire cake corresponds to a value of $n$). This way, when all agents receive exactly their proportional share, the welfare is $1$.}
\begin{itemize}
	\item \emph{Utilitarian welfare}---the arithmetic mean of the agents' normalized values
	\begin{align*}
	W_{util}(\mathbf{X}) = \frac{1}{n}\sum_{i=1}^n \frac{V_i(X_i)}{V_i(C)/n}
	\end{align*}
	\item \emph{Nash welfare}---the geometric mean of the agents' normalized values:
	
	\begin{align*}
	W_{Nash}(\mathbf{X}) = \left(\prod_{i=1}^n \frac{V_i(X_i)}{V_i(C)/n}\right)^{1/n}
	\end{align*}
	\end{itemize}
	The goal of maximizing the social welfare is not always compatible with the goal of guaranteeing a fair share to every agent. For example, \citet{Caragiannis2012Efficiency} describe a simple example in which the maximum utilitarian welfare of a proportional allocation is in $1$ while the maximum utilitarian welfare of an arbitrary (unfair) allocation is in $\Omega(\sqrt{n})$. This means that society has to pay a price, in terms of social-welfare, for insisting on fairness. This is called the \emph{price of fairness}. Formally, given a social welfare function $W$ and a fairness criterion $F$, the price-of-fairness relative to $W$ and $F$ (also called: ``the $W$-price-of-$F$'') is the ratio:
	
	\begin{align}\label{eq:pof}
	\frac{\sup_{\mathbf{X}} W(\mathbf{X})}{\sup_{\mathbf{Y}\in F} W(\mathbf{Y})} \tag{*}
	\end{align}
	where the supremum at the numerator is over all allocations $\mathbf{X}$ and the supremum at the denominator is over all allocations $\mathbf{Y}$ that also satisfy the fairness criterion $F$. The cited example shows that the utilitarian\-/price\-/of\-/proportionality is in $\Omega(\sqrt{n})$. 
	
	When there are geometric constraints, they affect both the numerator and the denominator of (*), i.e, the suprema are taken only on $S$-allocations. Therefore, it is not a-priori clear whether the price-of-fairness with constraints is higher or lower than without constraints.

\setcounter{proposition}{0}
\setcounter{theorem}{0}

\section{Arbitrary Cake and Arbitrary Pieces}
\label{sec:general}
In this section there are no geometric constraints on the cake or its pieces.
Consider the negative result first.

\rev{
\begin{proposition}
For every constants $r,w\in[0,1]$ where $r+w > 1$, it may be impossible to simultaneously guarantee $r$-proportionality and $w$-ownership.
\end{proposition}
}
\begin{proof}
Here is a scenario in which no $r$-proportional division satisfies $w$-ownership.
In the initial allocation, a single agent owns the entire cake. All $n$ agents have the same value-density and they value the entire cake at $1$. In any $r$-proportional division, the $n-1$ landless citizens must receive a total value of $(n-1)r/n = r-r/n$. Therefore the old landlord receives at most $1-r+r/n$. By assumption, $1-r<w$. Hence, if $n$ is sufficiently large, the old landlord receives less than $w$ of his/her previous value, contradicting $w$-ownership.
\qed
\end{proof}

The proof of the matching positive result requires a lemma.
\begin{lemma}\label{lem:convexity}
Given cake-allocations $\mathbf{Z}$ and $\mathbf{Y}$ and
a constant
$r\in[0,1]$, 
there exists an allocation $\mathbf{X}$ such that, for every agent $i\in[n]$:~~
$
V_i(X_i) \geq r V_i(Y_i) + (1-r) V_i(Z_i)
$.
Moreover, when $r$ is a constant rational number, $\mathbf{X}$ 
can be found using $O(n^2 \cdot \operatorname{len}(r))$ queries, where $\operatorname{len}(r)$ is the length of the binary representation of $r$.
\end{lemma}
\begin{proof}
Let us begin with an existential proof. Consider the set of all possible cake-partitions. For each cake-partition, consider the $n\times 1$ vector of utilities of the agents. The Dubins--Spanier theorem \citep{Dubins1961How} implies that the set of all such vectors is convex. So there is an allocation $\mathbf{X}$ satisfying the requirement as an equality: $\forall i\in[n]: V_i(X_i) = r V_i(Y_i) + (1-r) V_i(Z_i)$. 
	
The Dubins--Spanier theorem is not constructive. 
But when $r$ is a rational number, $r=p/q$ with $p<q$ some positive integers, 
an allocation $\mathbf{X}$ satisfying the lemma requirements can be constructed in polynomial time using an algorithm for a different problem:
\emph{fair division with different entitlements}.
In this problem, each agent $i\in [n]$
is entitled to a share $d_i/D$ of the entire cake, where $d_i$ is a positive integer and $D = \sum_i d_i$. Recently, \citet{cseh2020complexity} presented an algorithm that finds, using $2(n-1)\lceil\log_2(D)\rceil$ queries,
an allocation in which the value of each agent $i$ is at least $d_i/D$ of the total cake value. 
This algorithm should be applied to all pairs of agents. 
For every pair $i, j\in [n]$ with $i\neq j$, 
partition $Y_i\cap Z_j$ between $i$ and $j$ with $d_i := p$ and $d_j := q-p$.
The pairs $i,j$ can be processed in any order, even in parallel.
Finally, agent $i$ also gets the entire piece $Y_i\cap Z_i$.

For the sake of the proof, divide this latter piece arbitrarily into two subsets: one is worth $\frac{p}{q}V_i(Y_i\cap Z_i)$ and the other $\frac{q-p}{q}V_i(Y_i\cap Z_i)$ for agent $i$.
Now, each agent $i$ is allocated a piece $X_i$ which can be written as a union of $2 n$ disjoint subsets:
some $n$ subsets of $Y_i\cap Z_1,\ldots, Y_i\cap Z_n$, and some $n$ subsets of $Y_1\cap Z_i,\ldots, Y_n\cap Z_i$.
The former subsets are worth for $i$ at least 
$\frac{p}{q}V_i(Y_i\cap Z_1) + \cdots + \frac{p}{q}V_i(Y_i\cap Z_n)  = \frac{p}{q}V_i(Y_i\cap C)= \frac{p}{q}V_i(Y_i) = r V_i(Y_i)$,
and the latter subsets are worth for $i$ at least 
$\frac{q-p}{q}V_i(Y_1\cap Z_i) + \cdots + \frac{q-p}{q}V_i(Y_n\cap Z_i) = \frac{q-p}{q}V_i(Z_i\cap C) = \frac{q-p}{q}V_i(Z_i) = (1-r)V_i(Z_i)$.

The algorithm requires $O(\log{q})$ steps for every pair and $O(n^2 \log{q})$ steps overall.
\rev{Since $\operatorname{len}{(r)} = \log_2{p}+\log_2{q}\geq \log_2{q}$}, the run-time is 
in $O(n^2 \operatorname{len}{(r)})$ as claimed.
\qed\end{proof}

\begin{algorithm}
\caption{
\label{alg:general+}
Cake allocation with partial proportionality and ownership.
}
\begin{algorithmic}[1]
\REQUIRE ~\\
\begin{itemize}
\item A cake $C$ and an existing allocation $Z_1\sqcup\cdots\sqcup Z_n \subseteq C$;
\item A rational number $r = p/q$.
\end{itemize}
\ENSURE A new allocation $X_1\sqcup\cdots\sqcup X_n\subseteq C$ satisfying the following:
\begin{itemize}
\item Partial proportionality: for all $i\in[n]$, $ V_i(X_i)\geq r\cdot V_i(C)/n$.
\item Partial ownership: for all $i\in[n]$, $V_i(X_i)\geq (1-r)\cdot V_i(Z_i)$.
\end{itemize}
\STATE Find a proportional allocation  $Y_1\sqcup\cdots\sqcup Y_n = C$.
\FOR{$i:=1,\ldots,n$ and $j:=1,\ldots,n$ (when $i\neq j$)}
\STATE 
Using an algorithm for fair cake cutting with different entitlements  \citep{cseh2020complexity}, 
divide $Y_i\cap Z_j$ between $i$ and $j$ such that $i$ is entitled to $p/q$ and $j$ is entitled to $(q-p)/q$.
\ENDFOR
\STATE Allocate to each agent $i\in[n]$ the union of the following pieces:
\begin{itemize}
\item The piece $Y_i\cap Z_i$;
\item $i$'s share from $Y_i\cap Z_j$ for all $j\neq i$;
\item $i$'s share from $Y_j\cap Z_i$ for all $j\neq i$.
\end{itemize}
\end{algorithmic}
\end{algorithm}

\begin{theorem}
For every 
constants 
$r,w\in[0,1]$ where $r+w\leq 1$, and for every existing division of the cake, there exists a division that simultaneously satisfies $r$-proportionality and $w$-ownership. 
Moreover, 
when $r,w$ are  rational numbers, such a division 
can be found using $O(n^2\operatorname{len}(r))$ queries, where $\operatorname{len}(\cdot)$ denotes binary representation length.
\end{theorem}

\begin{proof}
Given a pair $r,w$ where $r+w\leq 1$, apply Lemma \ref{lem:convexity}, with
the initial allocation as $\mathbf{Z}$, 
and any proportional allocation as $\mathbf{Y}$ (a proportional allocation can be found efficiently by classic algorithms such as \citet{Steinhaus1948Problem}, \citet{Even1984Note}). 
By Lemma \ref{lem:convexity},
the new division satisfies $r$-proportionality and $(1-r)$-ownership, and $1-r\geq w$.
The process is summarized as Algorithm \ref{alg:general+}.
\qed\end{proof}

\begin{remark}
\label{rem:subset}
(a)
The redivision algorithm gives each agent a piece that is not only worth at least $(1-r) V_i(Z_i)$, but also a subset of $Z_i$ (in addition to a subset of $Y_i$). This may be desirable in some cases. E.g. in land division, old landlords may want not only a high value but also a subset of their old plot.

(b) 
\citet{cseh2020complexity} present an algorithm for cake-cutting even when the entitlements are irrational. The number of queries is finite (but unbounded).
This algorithm can be used in Algorithm \ref{alg:general+} to attain $r$-proportionality even when $r$ is irrational, though it is unclear why any government would be interested in such a strange fairness condition.
\end{remark}

\section{Interval Cake and Interval Pieces}
\label{sec:connected}
In this section the cake is an interval and each piece must be an interval.
Consider the negative result first.
\rev{
\begin{proposition}
When the cake is a 1-dimensional interval and each piece must be an interval, for every positive constants $r,w\in(0,1)$, it may be impossible to simultaneously satisfy $r$-proportionality and $w$-ownership. 

Moreover, for every $r\in(0,1]$ and every integer $d\in[n]$, there might be $d$ agents who, in any $r$-proportional division, receive at most a fraction $1/\lfloor{\frac{n}{d}}\rfloor$ of their old value.
\end{proposition}
}
\begin{proof}
Consider an existing allocation $\mathbf{Z}$, a positive constant $r\in(0,1]$, and an integer $d\leq n$.
Here is a scenario in which, in every $r$-proportional allocation, 
there are $d$ agents $j$ who receive a value of at most 
$V_j(Z_j)/\lfloor{n\over d}\rfloor$.
~~
\rev{
Partition the set of $n$ agents into $d$ groups:
\begin{enumerate}
\item $(n\mod d)$ groups containing $\lceil{n\over d}\rceil$ agents; these groups exist only when $d$ does not divide $n$.
\item $(d - n\mod d)$ groups containing $\lfloor{n\over d}\rfloor$ agents.
\end{enumerate}
Note that the total number of agents in these groups is indeed
$(n \mod d)\cdot \lceil \frac{n}{d}\rceil + (d - n \mod d)\cdot \lfloor \frac{n}{d}\rfloor = n$.

In each group of type 1, a single agent $j$ has a nonempty share $Z_j$ in the initial allocation. Agent $j$ values $Z_j$ at $\lceil{n\over d}\rceil$, and the rest of the cake at $0$. The value-density inside $Z_j$ is piecewise-uniform: it has $\lceil{n\over d}\rceil$ regions with a value of $1$ and $\lceil{n\over d}\rceil-1$ ``gaps''---regions with a value of 0. Each of the other $\lceil{n\over d}\rceil-1$ agents in the same group assigns a positive value only to a unique gap in $Z_j$; Figure \ref{fig:n-fractioned-value-density} illustrates the value-densities that are positive in one such $Z_j$. 

In each group of type 2, the valuations are defined similarly to the groups of type 1, except that $j$'s total value is $\lfloor{n\over d}\rfloor$ and there are $\lfloor{n\over d}\rfloor-1$ other agents.

\begin{figure}[h]
\begin{center}
\includegraphics[width=.8\columnwidth]{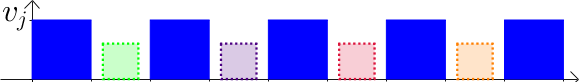}
\caption{
\label{fig:n-fractioned-value-density}
Solid boxes represent the value-density of agent $j$ within $Z_j$; each dotted box represents a value-density of some other agent in the same group as agent $j$. In this example, $\lceil {n\over d}\rceil = 5$.
}
\end{center}
\end{figure}
In any $r$-proportional division, each gap in $Z_j$ must be at least partially allocated to an agent in group $j$. Hence, the interval allocated to agent $j$ must contain at most a single positive region in $Z_j$---it cannot overlap any gap. Therefore the value of agent $j$ is at most $1$. 
For agents in groups of type 2, this value is at most $V_j(Z_j)/\lfloor{n\over d}\rfloor$; for agents in groups of type 1, it is at most 
$V_j(Z_j)/\lceil{n\over d}\rceil$, which is even smaller.
}
\qed
\end{proof}

The corresponding positive result (Theorem \ref{thm:connected+}) uses an algorithm for a different problem: \emph{fair multicake cutting}. 
In this problem,
there is a multicake $C$, which is a union of $m$ pairwise-disjoint subcakes, $C = Z_1 \sqcup \cdots \sqcup Z_m$.
The goal is to give each agent 
a piece contained in a single subcake.
It is easy to see that a proportional allocation might not exist even for a single agent.  
However, there always exists an allocation $(X_1,\ldots,X_n)$ such that

\begin{align}
\label{eq:multicake}
V_i(X_i) \geq 
\frac{1}{m+n-1}
\cdot V_i(C),
\end{align}
and this is the largest fraction that can be guaranteed \citep{segalhalevi2020multicake}.%
\footnote{
Consider $m$ subcakes and $n$ agents with the same valuation, who value the entire multicake at $m+n-1$.
Suppose the value of each subcake $j$ (for all agents) is some integer $u_j \geq 1$.
If some subcake is not allocated to any agent, then the multicake can be reduced to a smaller one with $m' := m-1$ subcakes, which all agents value at most $m'+n-1$. So suppose each subcake is allocated to at least one agent. 
Define the \emph{surplus} of each subcake as $u_j$ minus the number of agents who are allocated a piece in that subcake. The total surplus is $(m+n-1)-n = m-1$, so at least one subcake $j_0$ must have a surplus of at most $0$. At least one of the $u_{j_0}$ agents allocated a piece in subcake $j_0$ has a value of at most $1$.
}
Below, a different algorithm is presented, that attains the same value guarantee \eqref{eq:multicake}, and simultaneously guarantees democratic ownership.

\begin{algorithm}
\caption{
\label{alg:auction}
Auctioning a cake.
}
\begin{algorithmic}[1]
\REQUIRE A cake $Z_0$ 
and a set $N\subseteq [n]$ of agents.
\ENSURE A subset of agents $W\subseteq N$, \rev{possibly empty,} such that
\begin{align*}
(a) && \text{For each agent } & i'\in W, && V_{i'}(Z_0)\geq |W|;
\\
(b) && \text{For each agent } & i''\in N\setminus W, && V_{i''}(Z_0) < |W|+1.
\end{align*}
\STATE Choose an ordering $\sigma$ on $N$ such that 
$V_{\sigma [1]}(Z_0) \geq V_{\sigma [2]}(Z_0) \geq \cdots V_{\sigma [|N|]}(Z_0)$.
\STATE Initialize $W := \emptyset$.
\FOR{$j := 1, \ldots, |N|$}
\IF{$V_{\sigma [j]}(Z_0)\geq j$}
\label{step:if}
\STATE Add agent $\sigma [j]$ to $W$.
\ELSE
\RETURN $W$.
\ENDIF
\ENDFOR
\end{algorithmic}
\end{algorithm}

The algorithm uses as a subroutine Algorithm \ref{alg:auction}, which is called an ``auction''. 
It accepts as input a subcake $Z_0$ and a set $N$ of agents.
Each agent $i$ ``bids'' by evaluating $Z_0$.
The auction then chooses a subset  $W\subseteq N$ of ``winners''.
\rev{
The criterion for selecting the set of winners is specified by the following lemma.
\begin{lemma}
\label{lem:auction}
Given a subcake $Z_0$ and a set $N$ of agents, 
Algorithm \ref{alg:auction}  returns a subset $W\subseteq N$ of winners 
such that (a) each winner values $Z_0$ at least $|W|$, and 
(b) each loser values $Z_0$ at less than $|W|+1$.
\end{lemma}
\begin{proof}
The set of winners contains the first $|W|$ agents in the ordering $\sigma$. Now:
\begin{itemize}
\item 
Let $i' := \sigma[|W|] = $ the last agent added to $W$.
Step \ref{step:if} implies that $V_{i'}(Z_0)\geq |W|$. The same is true for all preceding agents in the ordering $\sigma$. Hence, condition (a) is satisfied for all winners.
\item 
Let $i'' := \sigma [|W|+1] = $ the first agent \emph{not} added to $W$.
Step \ref{step:if} implies that $V_{i''}(Z_0) < |W|+1$. The same is true for all following agents in the ordering $\sigma$. Hence, (b) is satisfied for all losers.\qed
\end{itemize}
\end{proof}
Note that Lemma \ref{lem:auction} allows the set of winners $W$ to be empty, if all agents in $N$ value $Z_0$ at less than $1$.
}

Before proving Theorem \ref{thm:connected+}, let us consider a simpler warm-up algorithm that attains only the partial-proportionality guarantee \eqref{eq:multicake}. 
It uses Algorithm \ref{alg:assign-to-multicake}. Its input is a multicake and a set of $n$ agents. By repeatedly applying the auction algorithm, it assigns the agents to the subcakes
\rev{
such that all agents assigned to a subcake value it sufficiently high, as formalized below.
\begin{lemma}
\label{lem:assign-to-multicake}
Given a multicake  $C = Z_1 \sqcup \cdots \sqcup Z_m$, a positive integer $n\leq m$,
and some $n$ agents who value $C$ at least $m+n-1$,
Algorithm \ref{alg:assign-to-multicake}
returns a partitioning of the set of agents $[n] = W_1\sqcup \cdots \sqcup W_m$ such that
for all $j\in[m]$ and for each agent $i'\in W_j$, 
$V_{i'}(Z_j)\geq |W_j|$.
\end{lemma}
\begin{proof}
\rev{Lemma \ref{lem:auction}(a)} ensures that all agents assigned to $W_j$ in step \ref{step:auction3} value $Z_j$ at least $|W_j|$.
It remains to prove that, by the end of the algorithm, every agent $i \in[n]$ is assigned to some $W_j$.
Suppose by contradiction that some $i\in[n]$ is not in any $W_j$. \rev{Lemma \ref{lem:auction}(b)} ensures that $V_i(Z_j) < |W_j|+1$ for all $j\in[m]$. Summing over all $j\in[m]$ gives $V_i(C) = V_i(Z_1)\cup\cdots \cup V_i(Z_m) < \sum_{j=1}^m (|W_j|+1) \leq (n-1)+m$, which contradicts the assumption $V_i(C) \geq (n-1)+m$.
\qed
\end{proof}
}

Once the agents are partitioned using Algorithm \ref{alg:assign-to-multicake}, for each $j\in[m]$, $Z_j$  can be divided among the agents in $W_j$ using any proportional cake-cutting algorithm. Since all these agents value $Z_j$ at least $|W_j|$, each agent gets a piece valued at least $1$, which is at least $\frac{1}{m+n-1}\cdot V_i(C)$ as in condition \eqref{eq:multicake}.

\begin{algorithm}
\caption{
\label{alg:assign-to-multicake}
Assigning agents to a multicake (warm-up algorithm).
}
\begin{algorithmic}[1]
\REQUIRE A multicake $C = Z_1 \sqcup \cdots \sqcup Z_m$ 
and a set $[n]$ of agents\rev{, where $m\geq n$}.
\begin{itemize}
\item Valuations are normalized such that $V_i(C) = m+n-1$ for all $i\in[n]$.
\end{itemize}

\ENSURE A partitioning of the agents $[n] = W_1\sqcup \cdots \sqcup W_m$ such that:
\begin{itemize}
\item For all $j\in[m]$ and
for each agent $i\in W_j$, 
$V_{i}(Z_j)\geq |W_j|$.
\end{itemize}
\STATE Initialize $N := [n] = $ the set of all agents.
\FOR{$j := 1, \ldots, m$}
\STATE 
\label{step:auction3}
Using Algorithm \ref{alg:auction}, 
auction the subcake $Z_j$ among the agents in $N$. 
\\
Let $W_j$ be the set of winners.
\STATE Remove $W_j$ from $N$.
\ENDFOR
\end{algorithmic}
\end{algorithm}

To prove Theorem \ref{thm:connected+}, it is required to guarantee, in addition to \eqref{eq:multicake}, also the democratic ownership condition. To this end, Algorithm \ref{alg:assign-to-multicake} is replaced with a modified assignment algorithm, presented as Algorithm \ref{alg:assign-to-multicake-democratic}. 
The main difference is that Algorithm \ref{alg:assign-to-multicake-democratic} allows each agent $j$ to participate in the auction on \rev{the subcake with the same index} $Z_j$, even if $j$ was already assigned to a previous subcake $Z_{j'}$ for some $j'<j$.
If $j$ is one of the winners for $Z_j$ (that is, $j\in W_j$), then $j$ is removed from the previous assignment $W_{j'}$. 
This creates a ``vacancy'' in $W_{j'}$;
this vacancy is filled by running a single  step of the auction on $Z_{j'}$. 
Let $i'$ be the first unassigned agent who did not win the first auction on  $Z_{j'}$ (``first'' by the $\sigma$ ordering in that auction). Recall that, by condition (b) of the auction algorithm,
$V_{i'}(Z_{j'}) < |W_{j'}|+1$ held before $j$ was removed from $W_{j'}$. 
If the condition does \emph{not} hold after $j$ is removed (that is: if $V_{i'}(Z_{j'}) \geq  |W_{j'}|+1$ after the removal), then $i'$ is added to $W_{j'}$.
This step guarantees that both conditions (a) and (b) still hold for $W_{j'}$, that is: $V_{i'}(Z_{j'}) \geq |W_{j'}|$ for all $i'\in W_{j'}$, and $V_{i''}(Z_{j'}) < |W_{j'}|+1$  for all $i''$ who are not assigned yet.

This new winner $i'$, who is added to $W_{j'}$, might be the agent $j'$ itself, who is already assigned to another set $W_{j''}$.
In this case, moving the agent $j'$ from $W_{j''}$ to $W_{j'}$ creates a vacancy in $W_{j''}$, which has to be filled in the same way. This chain reaction must eventually end, since whenever a vacancy is created, the number of agents $j$ who are assigned to \rev{the subset with the same index $Z_j$} increases by one, and this number never decreases as no agent $j$ is ever removed from $W_j$.

\rev{
The correctness of Algorithm \ref{alg:assign-to-multicake-democratic} is proved formally below.}

\begin{algorithm}
\caption{
\label{alg:assign-to-multicake-democratic}
Assigning agents to a multicake, with ownership.
}
\begin{algorithmic}[1]
\REQUIRE A multicake $C = Z_1 \sqcup \cdots \sqcup Z_m$ 
and a set $[n]$ of agents\rev{, where $m\geq n$}.
\begin{itemize}
\item Valuations are normalized such that $V_i(C) = m+n-1$ for all $i\in[n]$.
\end{itemize}

\ENSURE A partitioning of the agents $[n] = W_1\sqcup \cdots \sqcup W_m$ such that
\begin{itemize}
\item (a) 
For all $j\in[m]$ and for each agent $i'\in W_j$, 
$V_{i'}(Z_j)\geq |W_j|$.
\item (b)
For each agent $j\in[n]$, 
either $j\in W_j$
or $V_j(Z_j) < |W_j|+1$.
\end{itemize}
\STATE Initialize $N := [n] = $ the set of all agents.
\FOR{$j := 1, \ldots, m$}
\STATE If $j \leq n$ then let $N_j := N \cup \{j\}$; else 
let $N_j := N$.
\STATE
\label{step:auction}
Using Algorithm \ref{alg:auction}, 
auction the subcake $Z_j$ among the agents in $N_j$. 
\\
 Let $W_j$ be the set of winners.
\STATE
 Remove $W_j$ from $N$.
\IF{$j\in W_j$ and also $j\in W_{j'}$ for some 
$j'<j$}
\STATE 
\label{step:remove}
Remove $j$ from $W_{j'}$.
\STATE 
\label{step:one-step-auction}
Let $i'$ be the first loser in the auction on $Z_{j'}$, who is still in $N\cup \{j'\}$.
\IF{$V_{i'}(Z_{j'})\geq |W_{j'}|+1$}
\STATE 
\label{step:add}
Add $i'$ to $W_{j'}$;
\STATE Remove $i'$ from $N$.
\STATE 
\label{step:remove-again}
If $i' = j'$ and also $i' \in W_{j''}$ for some $j''<j$,
then repeat steps \ref{step:remove}--\ref{step:remove-again} with $j'$.
\ENDIF
\ENDIF
\ENDFOR
\end{algorithmic}
\end{algorithm}

\rev{
\begin{lemma}
\label{lem:assign-to-multicake-democratic}
Given a multicake  $C = Z_1 \sqcup \cdots \sqcup Z_m$, a positive integer \rev{$n\leq m$},
and some $n$ agents who value $C$ at least $m+n-1$,
Algorithm \ref{alg:assign-to-multicake-democratic}
returns a partitioning of the set of agents $[n] = W_1\sqcup \cdots \sqcup W_m$ such that
(a) For all $j\in[m]$ and for each agent $i'\in W_j$, 
$V_{i'}(Z_j)\geq |W_j|$;
(b) For all $j\in[n]$, either $j\in W_j$ or $V_j(Z_j) < |W_j|+1$.
\end{lemma}
\begin{proof}
(a)
\rev{Lemma \ref{lem:auction}(a)} ensures that,
in step \ref{step:auction} in iteration $j$,
all agents assigned to $W_j$ value $Z_j$ at least $|W_j|$.
The same holds for agents added to $W_j$ in step \ref{step:add} in a later iteration.
The size of $W_j$ never increases above its initial level: 
it can only increase by one in step \ref{step:add} after it has decreased by one in step \ref{step:remove}. Therefore, the condition still holds when the algorithm ends.

(b)
Agent $j$ always participates in the auction 
in step \ref{step:auction} of iteration $j$.
There are two possible cases.
\begin{itemize}
\item If $j$ is assigned to $W_j$, then he remains in $W_j$ until the end of the algorithm, since $j$ is never removed from $W_j$.
\item Otherwise, \rev{Lemma \ref{lem:auction}(b)} 
ensures that $V_j(Z_j) < |W_j|+1$. 
It remains to show that the condition still holds in later iterations.

For clarity, focus on a specific agent $j'$, and suppose that in iteration $j'$, agent $j'$ did not win the auction, so
$V_{j'}(Z_{j'}) < |W_{j'}|+1$.
In later iterations, 
the size of $W_{j'}$ may decrease by one in step \ref{step:remove}. 
In this case, lines \ref{step:one-step-auction}--\ref{step:add} guarantee that, either $j'$ is added to $W_{j'}$, 
or another agent is added to $W_{j'}$;
in the latter case, $W_{j'}$ returns to its original size,
so $V_{j'}(Z_{j'}) < |W_{j'}|+1$ still holds.
\end{itemize}

It remains to prove that, by the end of the algorithm, every agent $i \in[n]$ is assigned to some $W_j$.
Suppose by contradiction that some $i\in[n]$ is not in any $W_j$.
This means that $i$ did not win the auction in step \ref{step:auction} in any iteration $j$,
so Condition (b) of the auction algorithm ensures that $V_i(Z_j) < |W_j|+1$ for all $j\in[m]$.

For any $j'$, the size of $W_{j'}$ may decrease by one in step \ref{step:remove} in later iterations.
In this case, lines \ref{step:one-step-auction}--\ref{step:add} guarantee that, if $i$ is not added to $W_{j'}$, 
then 
either $V_{i}(Z_{j'}) < |W_{j'}|+1$  still holds,
or another agent is added to $W_{j'}$;
in the latter case, $W_{j'}$ returns to its original size,
so $V_{i}(Z_{j'}) < |W_{j'}|+1$ still holds.

Summing over all $j\in[m]$ gives $V_i(C)  < \sum_{j=1}^m (|W_j|+1) \leq (n-1)+m$, contradicting the assumption $V_i(C) \geq (n-1)+m$. \qed
\end{proof}
}

The above lemmas and algorithms are used to prove the following theorem.
\begin{algorithm}
\caption{
\label{alg:connected+}
Allocation of an interval cake, with partial proportionality and democratic ownership.
}
\begin{algorithmic}[1]
\REQUIRE ~\\
\begin{itemize}
\item An interval $C$ and an existing allocation into $n$ intervals, $Z_1 \sqcup \cdots \sqcup Z_n\subseteq C$.
\end{itemize}
\ENSURE A new allocation  $X_1 \sqcup \cdots \sqcup X_n \subseteq C$ into $n$ intervals, satisfying the following:
\begin{itemize}
\item $1/2$-proportionality: for all $i\in[n]$, $ V_i(X_i)\geq V_i(C)/(2 n)$.
\item Democratic ownership: for all $d<n$, for at least $n-d$ agents $i$,  $V_i(X_i)> V_i(Z_i) / \lceil\frac{n}{d}\rceil$.
\end{itemize}
\STATE 
Normalize the valuations such that $V_i(C) = 2 n-1$ for all $i\in[n]$.

\STATE 
\label{step:completion}
Given the original partial allocation $Z_1 \sqcup \cdots \sqcup Z_n \subseteq C$, expand it to a complete partition $Z_1' \sqcup \cdots \sqcup Z_n'  = C$, by attaching each ``blank'' (unallocated interval in $C$) arbitrarily to one of the two adjacent allocated intervals, to its left or to its right. 

\STATE
\label{step:multicake-division}
Considering the intervals $Z'_1,\dots,Z'_n$ as subcakes in a multicake, 
use Algorithm \ref{alg:assign-to-multicake-democratic} (with $m=n$) 
to partition the agents into $n$ subsets $W_1,\ldots,W_n$.

\STATE
\label{step:proportional-division}
Divide each interval $Z'_j$ among the agents in $W_j$ using any algorithm for connected proportional cake-cutting, e.g. \citet{Even1984Note}.
\end{algorithmic}
\end{algorithm}

\rev{
\begin{theorem}
When the cake is a 1-dimensional interval 
and each piece must be an interval, 
it is possible to find in time $O(n^2 \log{n})$ 
a division simultaneously satisfying democratic-ownership and $1/2$-proportionality.
\end{theorem}
}

\begin{proof}
The proof is constructive and uses Algorithm \ref{alg:connected+}.
It is proved below that the output of this 
algorithm satisfies the requirements of the theorem.

\item 
\paragraph{Proof that the output of Algorithm \ref{alg:connected+} satisfies  $1/2$-proportionality.}
Each subcake $Z'_j$ is divided proportionally among the agents in $W_j$. 
\rev{By Lemma \ref{lem:assign-to-multicake-democratic}(a)}, all these agents value $Z'_j$ at least $|W_j|$. Hence, their piece has a value of at least $1$. By the normalization step (with $m=n$), $1\geq \frac{1}{2n-1}V_i(C) > V_i(C)/(2n)$.

\item 
\paragraph{Proof that the output of Algorithm \ref{alg:connected+} satisfies democratic-ownership.}
Applying the pigeonhole principle to the partition yielded by Algorithm \ref{alg:assign-to-multicake-democratic} 
implies that, for every integer $d \range{1}{n-1}$, at most $d$ of the subsets $W_j$, for $j\in[n]$, are populated by at least $\lceil {n\over d}\rceil$ agents. Hence, at least $n-d$ such subsets are populated by at most $\lceil {n\over d}\rceil-1$ agents, that is, they satisfy $|W_j|\leq \lceil{n\over d}\rceil-1$. 
For each $j\in[n]$, consider two cases:
\begin{description}
\item[\emph{Case \#1}]: $j\in W_j$. 
Then agent $j$ receives a piece of the subcake $Z'_j$. 
By the proportionality of the subcake division 
(Algorithm \ref{alg:connected+}, step \ref{step:proportional-division}):

\begin{align*}
V_j(X_j) \geq 
\frac{V_j(Z'_j)}{|W_j|}
\geq 
\frac{V_j(Z'_j)}{\lceil {n\over d}\rceil-1}
>
\frac{V_j(Z'_j)}{\lceil {n\over d}\rceil}.
\end{align*}
\item[\emph{Case \#2}]: $j\not \in W_j$. Then, 
\rev{by Lemma \ref{lem:assign-to-multicake-democratic}(b)}, 
$V_j(Z'_j) < |W_j|+1 \leq \lceil {n\over d}\rceil$.
Therefore, 
$V_j(Z'_j) / \lceil {n\over d}\rceil < 1$.
As explained in the proof of $1/2$ proportionality, the value of each agent is at least $1$:

\begin{align*}
V_j(X_j) 
\geq 
1
>
\frac{V_j(Z'_j)}{\lceil {n\over d}\rceil}.
\end{align*}
\end{description}

In both cases, agent $j$ receives a value greater than $V_j(Z'_j) / \lceil {n\over d}\rceil$.
The latter ratio is at least $V_j(Z_j) / \lceil {n\over d}\rceil$ since $Z'_j\supseteq Z_j$.

\item 
\paragraph{Run-time complexity of Algorithm \ref{alg:connected+}.}
The auction in Algorithm \ref{alg:auction} requires $O(n \log n)$ queries.
Algorithm \ref{alg:assign-to-multicake-democratic} 
performs $m$ auctions.
Each auction might lead to a sequence of at most $n$ vacancies which require one query each to be filled.
Algorithm Even--Paz requires $O(n \log{n})$ queries, and it is done $m$ times---once for each subcake.
All in all, the run-time is in $O(m n\log{n}) = O(n^2\log{n})$, since $m=n$.
\qed
\end{proof}

\subsection{\textbf{Future Work}}
\label{sub:future1}
\paragraph{Crossing the boundary lines.}
Algorithm \ref{alg:connected+} 
treats each existing piece $Z_j$ as an isolated subcake, and insists that each new piece be entirely contained in an existing piece, i.e, it does not cross the existing division lines. This may be desirable in the context of land division, since it respects the \emph{Uti possidetis juris}
\citep{lalonde2002determining}---
\rev{an international law principle saying that newly-formed sovereign states should retain the internal borders that their preceding dependent area had before their independence.}
However, it also implies that the resulting division can only be $1/2$-proportional and never fully proportional,
as the fraction $\frac{1}{n+m-1}$ in \eqref{eq:multicake} is tight.

Theoretically, it may be possible to improve the proportionality guarantee by devising a different redivision procedure that crosses the existing division lines.
This raises the following open question: what is the highest level of proportionality that is compatible with democratic-ownership?

\paragraph{Several pieces per agent.}
Theorem \ref{thm:general+} allows an unlimited number of pieces per agent,%
\footnote{
In fact, $4n-3$ pieces per agent are sufficient. It is known that, for every pair of agents, an allocation with different entitlements can be attained with two cuts, e.g. \citep{SegalHalevi2018Entitlements}. So each agent receives at most $2$ pieces. In Algorithm \ref{alg:general+}, each agent participates in $2\cdot(n-1)$ such allocation instances, and gets one additional piece.
}
while Theorem \ref{thm:connected+} allows only a single piece per agent. What happens between these extremes?
In particular, if each agent can get $k$ intervals, for some fixed $k\geq 1$, then 
there is an algorithm for dividing a multicake with $m$ subcakes among $n$ agents such that each agent gets at least 
$\min\left(\frac{1}{n},
\frac{k}{m+n-1}\right)$ of the total cake value
\citep{segalhalevi2020multicake}.
However, the algorithm does not guarantee democratic ownership. 
If a similar proportionality guarantee could be attained together with democratic ownership, it could be used in Section \ref{sec:connected} with $m = k\cdot n$ subcakes (since for each agent there could be up to $k$ subcakes in the original division), to get a bound of $\frac{k}{kn+n-1}$, which implies $\frac{k}{k+1}$-proportionality for any $k\geq 1$.

\section{Polygonal Cake and Polygonal Pieces}
\label{sec:rectangle}
In this section the cake is a polygon in $\mathbb{R}^2$.
There is a set $S$ of \emph{usable pieces} (e.g. rectangles), the initial allocation $Z_1,\ldots,Z_n$ is an $S$-allocation, and the output should be an $S$-allocation too.

The main obstacle in applying Algorithm \ref{alg:connected+} to such a cake is step \ref{step:completion}---\rev{extending the initial partial allocation to a complete partition of the entire cake.}
It is not possible to simply attach each unallocated part of $C$ to an allocated $S$-piece, since the result might not be an $S$-piece. 
The initial partial allocation $Z_1\sqcup\cdots \sqcup Z_n \subseteq C$ still must be expanded to a complete partition of $C$,  since Algorithm \ref{alg:connected+} uses  Algorithm \ref{alg:assign-to-multicake-democratic}, which requires a complete partition.
But the number of pieces in the complete partition might be larger than $n$, since there might be unattached ``blanks'' (holes).

The goal, then, is to find a partition of $C$ into $S$-pieces, $Z_1'\sqcup\cdots \sqcup Z_{n+b}'  = C$, with $b\geq 0$, such that every input $S$-piece is contained in a unique output $S$-piece: $\forall j\in[n]:  Z_j \subseteq Z_j'$. The additional $b$ $S$-pieces are called \emph{blanks}. In Step \ref{step:multicake-division}, the multicake will contain $m=n+b$ subcakes.
Hence, the fraction guaranteed to each agent will be $1/(n+m-1) = 1/(2 n + b - 1)$. A smaller value of $b$ translates to a better proportionality guarantee.

An example of the input and output of the allocation-completion step,
when $S$ is the set of rectangles, is shown in Figure \ref{fig:completion}.
In the partial allocation there are $n=4$ rectangles;
in the complete partition there are $m=5$ rectangles.

\begin{figure}[h]
\begin{center}
\includegraphics[width=.4\columnwidth]{completion-rectangle-1.png}
~$\Rightarrow$~
\includegraphics[width=.4\columnwidth]{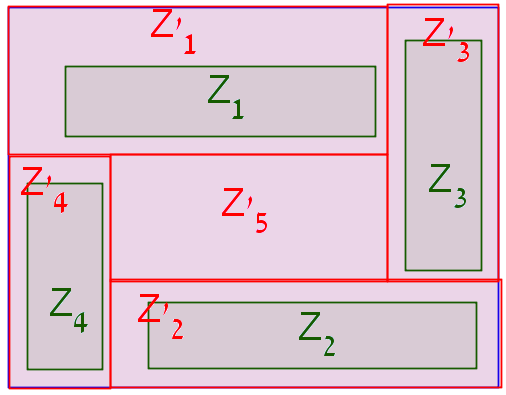}
\end{center}
\caption{
\label{fig:completion}
Allocation-completion with $n=4$ original pieces and $b=1$ blank, denoted $Z_5'$.}
\end{figure}

This raises the question of what is the minimum number of blanks required for a complete partition? 
This geometric question has been studied in a different paper \citep{akopyan2018counting}. 
The answers are summarized in 
Table \ref{tab:completion}.%
\footnote{
The expressions in Table \ref{tab:completion} are tight in the worst case: there are partial allocations that require exactly this number of blanks.
}
Moreover, it is proved there that the worst-case optimal number of blanks is attained in any arrangement in which all pieces are \emph{maximal}, that is, cannot be expanded without overlapping another piece. Formally:
\begin{definition}
Given a set $S$ of usable pieces,
a cake $C$
and some $m$ pairwise-disjoint $S$-pieces $Z_1,\ldots,Z_n \subseteq C$:

(a) An $S$-piece $Z'\subseteq C$ is called \emph{maximal} w.r.t. $C, Z_1,\ldots,Z_n$, if every superset $S$-piece $Z'' \supsetneq Z'$ that is contained in $C$ overlaps one of $Z_1,\ldots,Z_n$.

(b) An $S$-piece $Z'\subseteq C$ is called a \emph{maximal expansion} of $Z_j$ if $Z'\supseteq Z_j$, and $Z'$ is maximal w.r.t. $C, Z_1,\ldots, Z_{j-1}$, $Z_{j+1},\ldots,Z_n$.

(c) A set of pairwise-disjoint $S$-pieces $Z'_1,\ldots,Z'_n \subseteq C$ is called a \emph{complete expansion} of $Z_1,\ldots,Z_n$ if for each $j\in[m]$, $Z'_j$ is a maximal expansion of $Z_j$ w.r.t. $C, Z'_1,\ldots, Z'_{j-1}$, $Z'_{j+1},\ldots,Z'_n$.
\end{definition}

\begin{table}
\caption{
\label{tab:completion}
Worst-case number of blanks in a maximal arrangement of pairwise-disjoint $S$-pieces contained in a cake $C$.
From \citet{akopyan2018counting}.
}
\begin{center}
\begin{tabular}{|c|c|c|}
\hline
\textbf{Cake $C$} &
\textbf{Usable pieces $S$} & \textbf{Number of blanks $b$} \\
\hline
Polygon &
Polygons & $0$ \\
\hline
Simple polygon (without holes) &
Simple polygons & $0$ \\
\hline
Axes-parallel rectangle & 
Axes-parallel rectangles & $n-\lceil 2\sqrt{n}-1 \rceil$ \\
\hline
Convex figure & 
Convex figures & $2n-5$ \\
\hline
Rectilinear polygon, $T$ reflex vertices  & 
Axes-parallel rectangles & $T+n-\lceil 2\sqrt{n}-1 \rceil$ \\
\hline
\end{tabular}
\end{center}
\end{table}

\rev{
Complete expansions are used to prove
Theorems \ref{thm:rectangle+}, \ref{thm:convex+} and \ref{thm:rectilinear+} below.
\begin{theorem}
When the cake is a rectangle 
and each piece must be a parallel rectangle,
it is possible to find in time $O(n^2 \log{n})$ 
a division simultaneously satisfying democratic-ownership and $1/3$-proportionality.
\end{theorem}
}

\begin{proof}

\rev{
Find a complete expansion of the initial allocation $Z_1\sqcup\cdots\sqcup Z_n\subseteq C$ in the following way:
}

\begin{framed}
\begin{algorithmic}
\FOR{$j := 1, \ldots, n$}
\STATE
Expand each of the four sides of $Z_j$ until it \rev{touches} the boundary
\\
~~~of $C$ or of another piece $Z_i$ for some $i\neq j$.
\STATE
Denote the expanded piece by $Z'_j$; note that it is  a maximal expansion
\\
~~~ of $Z_j$ 
w.r.t. $C,Z_1,\ldots,Z_{j-1}$,$Z_{j+1},\ldots,Z_n$.
\STATE Replace $Z_j$ with $Z'_j$.
\ENDFOR
\end{algorithmic}
\end{framed}
\rev{
\citet{akopyan2018counting}
prove that the remaining ``holes'' (unfilled parts of $C$) are all rectangular, and their number $b$ satisfies $b \leq n-\lceil 2\sqrt{n}-1 \rceil < n$.
So there is a complete $S$-partition $Z'_1\sqcup\cdots\sqcup Z'_{n+b} = C$.

Considering the $S$-pieces $Z'_1,\ldots,Z'_{n+b}$ as subcakes, use Algorithm \ref{alg:assign-to-multicake-democratic} 
with $m = n+b$
to partition the agents into subsets $W_1\sqcup\cdots\sqcup W_{n+b}$.
Then, use the Even--Paz algorithm to partition each $Z'_j$ among the agents in $W_j$.
While the Even--Paz algorithm was originally presented for 1-dimensional intervals, it is easily applicable to axes-parallel rectangles, for example by ensuring that all cuts are parallel to the $y$ axis. The resulting allocation satisfies democratic ownership. In addition, for each agent $i\in [n]$:
$V_i(X_i)\geq$ 
$\frac{1}{n + m -1}V_i(C)\geq$
$\frac{1}{2n + b-1}V_i(C)>$
$V_i(C)/(3n)$, 
so the allocation is $1/3$-proportional.

The run-time of finding a maximal expansion of a rectangle is in $O(n)$, since it requires to compare each side of the rectangle to the sides of the other $n-1$ rectangles. Therefore, the run-time of finding a complete expansion is $O(n^2)$.
The run-time of executing Algorithm \ref{alg:assign-to-multicake-democratic} and the Even--Paz algorithms is $O(n^2 \log{n})$ as in Theorem \ref{thm:connected+}, so the total run-time is $O(n^2 \log{n})$ too.
}
\qed
\end{proof}

\rev{
\begin{theorem}
When the cake is a 2-dimensional convex polygon 
and each piece must be convex,
there exists a division simultaneously satisfying democratic-ownership and $1/4$-proportionality.
\end{theorem}
}
\begin{proof}

The proof is similar to that of the previous theorem,
and relies on the existence of a complete expansion of the initial allocation. However, 
I do not have a constructive algorithm for finding a maximal expansion of a convex figure.
Recently, 
\citet{Dmitry2021Maximal} presented an algorithm for finding a maximal expansion of a convex polygon contained in an arbitrary polygon. However, since it is not formally published, Theorem \ref{thm:convex+} is stated only as an existence result.

To prove existence of a maximal expansion of $Z_j$,\footnote{The following proof is an adaptation of the proof of Lemma 3.3 in \cite{Mohammadi2017Reconstruction}. I am grateful to Ashkan for his help with this argument.} let $Y_j$ be the set of potential expansions of $Z_j$, i.e:
\begin{align*}
Y_j := \{ 
X_j | 
X_j \supseteq Z_j
\text{~and~}
X_j \subseteq (C \setminus \cup_{i\neq j} Z_i)
\text{~and~}
X_j \text{~is convex.}
\}
\end{align*}
$Y_j$ is partially ordered by inclusion.
The Kuratowski--Zorn lemma can be used to prove that it has a maximal element.
To use this lemma, one has to prove that every chain in $Y_j$ has an upper bound in $Y_j$. Indeed, let $Y'_j \subseteq Y_j$ be a chain. Let $\widehat{Y'_j}$ be the union of all sets in $Y'_j$. Then $\widehat{Y'_j}$ is an upper bound on $Y'_j$, and $\widehat{Y'_j}\in Y_j$ because:
\begin{itemize}
\item $\widehat{Y'_j} \supseteq Z_j$---since all sets in $Y_j$ contain $Z_j$.
\item $\widehat{Y'_j} \subseteq (C \setminus \cup_{i\neq j} Z_i)$---since all sets in $Y_j$ are contained in $(C \setminus \cup_{i\neq j} Z_i)$.
\item $\widehat{Y'_j}$ is convex---since for every two points in $\widehat{Y'_j}$, there exists a set in the chain $Y'_j$ that contains both of them. This set is convex so it contains the segment between them, so $\widehat{Y'_j}$ contains this segment too.
\end{itemize}
Thus, by the Kuratowski--Zorn lemma, $Y_j$ has a maximal element.
Denote this element by $Z_j'$.
By definition, it is a maximal expansion of $Z_j$ w.r.t $C,Z_1,\ldots,Z_{j-1},Z_{j+1},\ldots,Z_n$.
As in the proof of Theorem \ref{thm:rectangle+},
one can proceed iteratively for $j:=1,\ldots,n$, replacing each $Z_j$ with its maximal expansion $Z'_j$. This
yields a complete expansion of $Z_1,\ldots,Z_n$.
\citet{akopyan2018counting} prove that the remaining holes are all convex, and their number $b$ satisfies $b \leq  2n-5 < 2n$.
So there is a complete $S$-partition $Z'_1\sqcup\cdots\sqcup Z'_{n+b} = C$.

Considering the $S$-pieces $Z'_1,\ldots,Z'_{n+b}$ as subcakes, use Algorithm \ref{alg:assign-to-multicake-democratic} 
with $m = n+b$
to partition the agents into subsets $W_1\sqcup\cdots\sqcup W_{n+b}$.
Then, use the Even--Paz algorithm to partition each $Z'_j$ among the agents in $W_j$. Requiring that all cuts made by this algorithm are parallel to the y-axis guarantees that the pieces are convex. 
The resulting allocation satisfies democratic ownership, and for each agent $i\in [n]$, 
$V_i(X_i)\geq$ 
$\frac{1}{n + m -1}V_i(C)\geq$
$\frac{1}{2n + b-1}V_i(C)>$
$V_i(C)/(4n)$, 
so the allocation is $1/4$-proportional.
\qed
\end{proof}

\rev{
\begin{theorem}
	When the cake is an axes-parallel rectilinear polygon with $T$ reflex vertices,
	and each piece must be an axes-parallel rectangle,
	it is possible to find in time $O(n^2 \log{n}+poly(T))$
	a division satisfying democratic-ownership, in which each agent receives at least $1/(3 n+T)$ of the total cake value.%
\end{theorem}
}
\begin{proof}
\rev{
The proof starts similarly to Theorem \ref{thm:rectangle+},
by finding a complete expansion of the initial allocation $Z_1\sqcup\cdots\sqcup Z_n\subseteq C$.
\citet{akopyan2018counting} prove that the remaining ``holes'' are all simply-connected rectilinear polygons,
and that they can be partitioned into at most $b$ rectangles, where
$b \leq n-\lceil 2\sqrt{n}-1 + T \rceil < n + T$.
The partitioning can be done in time $\operatorname{poly}(T)$; see
 \citep{Keil2000Polygon,Eppstein2010GraphTheoretic}.
 
Proceeding as in the previous theorems, 
Algorithm \ref{alg:assign-to-multicake-democratic} 
finds an allocation satisfying democratic-ownership in which, for each agent $i\in [n]$,
$V_i(X_i)\geq$ 
$\frac{1}{n + m -1}V_i(C)\geq$
$\frac{1}{2n + b-1}V_i(C)>$
$V_i(C)/(3n+T)$. 
The run-time of this part is in $O(n^2 \log{n})$ as explained in Theorem \ref{thm:rectangle+}.
\qed
}
\end{proof}

\subsection{\textbf{Future Work}}
\label{sub:future2}
The results in this section raise several future work questions, which may be of interest to researchers in computational geometry.

\paragraph{Rectangle and convex pieces.}
While the bounds in Table \ref{tab:completion} are worst-case optimal, in specific instances there may be a maximal expansion with fewer blanks. 
What is an efficient algorithm for finding a maximal expansion of a given allocation, which has the smallest number of blanks possible in the given instance?

\paragraph{General polygons.}
When the cake and the pieces are general polygons, or hole-free (simply-connected) polygons, but not necessarily convex, 
there exists a maximal expansion with no blanks at all (Table \ref{tab:completion}).
\rev{
Using such an expansion, one could expect to 
find an allocation satisfying democratic ownership and $1/2$-proportionality.
However, this requires to apply the Even--Paz algorithm to a non-convex polygon such that the pieces remain connected (or simply-connected);
cutting along the y axis (as in the rectangle and convex cases)  might yield disconnected pieces.
One way to partition a polygon into connected pieces is to map each point $p$ of the polygon to the point nearest to $p$ on the polygon perimeter, as in Figure \ref{fig:medial-axis}.
\begin{figure}
\begin{center}
\includegraphics[scale=.8]{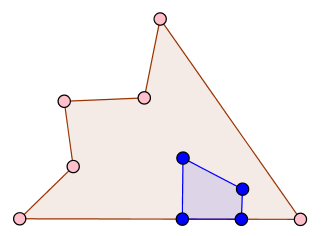}
\end{center}
\caption{\label{fig:medial-axis}
\rev{
The points in the blue trapezoid are mapped to the points on the blue interval at the bottom side of the polygon;
each point in the trapezoid is mapped to the point just below it, which is the point nearest to it on the polygon perimeter.
}
}
\end{figure}
Then, the perimeter can be partitioned like a 1-dimensional interval. 
\citet{HarPeled2021Partitioning} and \citet{Yagami2021Computing} present sketches of how this can be done, but again, they were not formally published so I do not claim any result for the cases in which $S$ is the family of connected polygons, or of simply-connected polygons.
}
%

\paragraph{More than two dimensions.}
When the cake and pieces are three-dimensional (e.g. axes-parallel boxes), what is an upper bound on the number of blanks in a maximal expansion of a partial allocation?


\section{Price-of-Fairness Bounds} \label{sec:pof-bounds}
This section uses the redivision theorems of previous sections to prove upper bounds on the price of partial\-/proportionality. 

\begin{theorem}
	For every $r\in[0,1]$, the utilitarian price of $r$-proportionality is at most $1/(1-r)$.
\end{theorem}
\begin{proof}
Let $\mathbf{Z}$ be a utilitarian-optimal allocation of a cake.
Apply Theorem \ref{thm:general+}
with $\mathbf{Z}$ as the original allocation.
The resulting division is $r$-proportional and satisfies $(1-r)$-ownership, so its utilitarian welfare is at least $1-r$ times the utilitarian welfare in $\mathbf{Z}$. 
\qed
\end{proof}

The proofs of Theorems \ref{thm:interval-pof}, \ref{thm:rectangle-pof} and \ref{thm:convex-pof} are similar; only the constants are different. 
The proof of Theorem \ref{thm:rectangle-pof} is presented below; to get the proofs of the other theorems, replace the constant ``3'' with ``2'' or ``4'' respectively.

\setcounter{theorem}{7}
\begin{theorem}
	When the cake is a rectangle 
	and each piece must be a rectangle,
	for every $r\leq 1/3$:
	\begin{enumerate}
		\item The utilitarian-price of $r$\-/proportionality is $O(\sqrt{n})$;
		\item The Nash-price of $r$\-/proportionality is at most $(3 e)\cdot \exp {(1 / (4\pi e))} \approx 8.4$.\end{enumerate}
\end{theorem}
\begin{proof}
Part 1 is proved by Lemma \ref{thm:upop} below;
part 2 is proved by Lemma \ref{thm:npop} below.
\end{proof}

\begin{lemma}
\label{thm:upop}
Let $\mathbf{Z}$ be a utilitarian-optimal rectangular allocation of a cake $C$ among $n$ agents,
with $Z_1\sqcup\cdots\sqcup Z_n \subseteq C$.
Assume, without loss of generality, that the valuations are normalized such that \rev{$V_i(C)=n$} for all $i\in[n]$, so
the utilitarian welfare of $\mathbf{Z}$ is:
\begin{align*}
U:={1\over n}\sum_{i=1}^{n}{ V_i(Z_i) }
\end{align*}
Then, there exists a $(1/3)$\-/proportional rectangular allocation of $C$ to these same $n$ agents with utilitarian welfare $W$, such that $U/W\in O(n^{1/2})$.
\end{lemma}
\begin{proof}
Apply Theorem \ref{thm:rectangle+}
to the allocation $\mathbf{Z}$. The theorem ensures that the new division is $(1/3)$\-/proportional and satisfies democratic-ownership. The latter ensures that for every integer $d\range{1}{n-1}$, there is a set $S_d$ containing at least $n-d$ agents, such that the value of every $j\in S_d$ is \rev{larger than}
$
V_j(Z_j) / \lceil \frac{n}{d}\rceil
\geq 
V_j(Z_j) / ((n+1)/d)
=
d \cdot V_j(Z_j) / (n+1)$.


Renumber the agents in the following way. \rev{Choose} an agent from $S_{n-1}$ (which contains at least one agent) and number him/her $n-1$. \rev{Choose} an agent from $S_{n-2}$ (which contains at least one other agent) and number him/her $n-2$. Continue this way to number the agents by $d=n-1,\dots, 1$; renumber the remaining agent $0$. Now, the utilitarian welfare of the new division is lower-bounded by:

\begin{align*}
W
&> {1\over n}\sum_{d=0}^{n-1}\max\left(\frac{d\cdot  V_d(Z_d)}{(n+1)},\frac{1}{3 }\right )
\\&\geq {1\over n}\cdot \frac{1}{3  }\cdot\sum_{d=0}^{n-1}\max(d \cdot V_d(Z_d)/n,~1)
\end{align*}
and the utilitarian welfare ratio is at most:

\begin{align*}
	\frac{U}{W} <  3 n\cdot\frac{\sum_{d=0}^{n-1}V_d(Z_d)/n}{\sum_{d=0}^{n-1}\max(d\cdot V_d(Z_d)/n,~1)}
\end{align*}
	\def\num{\textsc{Num}}
	\def\den{\textsc{Den}}
	\def\ratio{\frac{\num}{\den}}
Denote the right-hand side by \rev{$3n\cdot \ratio$}.  Let $a_d=V_d(Z_d)/n$, so that 
	
\begin{align*}
\num=\sum_{d=0}^{n-1}a_d
&&
\den=\sum_{d=0}^{n-1}\max(d\cdot a_d,~1).
\end{align*} 
Since $n$ is fixed, an upper bound on $U/W$ requires to find a sequence $a_0,\dots,a_{n-1}$ that maximizes $\ratio$ subject to $\forall d: 0\leq a_d\leq 1$.
	

\begin{observation}
In a maximizing sequence, $a_0=1$ and there is no $d>0$ such that $a_d<1/d$.
\end{observation}
\begin{proof}
If $a_0<1$, then setting $a_0$ to $1$ strictly increases $\num$ and does not change $\den$. 
Similarly, if $a_d < 1/d$ for some $d>0$,
then setting it to $1/d$ strictly increases $\num$ and does not change $\den$. 
\qed
\end{proof} 

\begin{observation}
A maximizing sequence must be weakly-decreasing (for all $d<d'$, $a_{d}\geq a_{d'}$).
\end{observation}
\begin{proof}
If there exists $d<d'$ such that $a_d<a_{d'}$, then swapping $a_d$ with $a_{d'}$ strictly decreases $\den$ and does not change $\num$.	\qed
\end{proof}

\begin{observation}
\rev{There exists at least one maximizing sequence in which} there is no $d>0$ such that $1/d<a_d<1$. 
\end{observation}
\begin{proof}
\footnote{The proof idea is due to Varun Dubey in http://math.stackexchange.com/q/1609071/29780 }
\rev{
Call an element $a_d$ ``bad'' if $1/d < a_d < 1$. 
Consider a maximizing sequence with the smallest number of bad elements. If the number of bad elements is $0$, then the proof is complete. 
Otherwise, pick one bad element $a_d$.
Let $\epsilon := \min(1 - a_d,~ a_d - 1/d)$. Since $a_d$ is bad, $\epsilon > 0$, and both $a_d+\epsilon$ and $a_d-\epsilon$ are in $[1/d,1]$. Replacing $a_d$ with $a_d + \epsilon$ or $a_d-\epsilon$ yields a new ratio, and it is at most the maximum ratio. In particular:
	\begin{itemize}
	\item Replacing $a_d$ with $a_d+\epsilon$ makes the ratio $\frac{\num + \epsilon}{\den + d\epsilon}$; this new ratio is at most $\ratio$, so $\epsilon\cdot \den \leq d \epsilon \cdot \num
	\implies \den \leq d \cdot \num$.
	\item Replacing $a_d$ with $a_d-\epsilon$ makes the ratio $\frac{\num - \epsilon}{\den - d\epsilon}$; that new ratio is at most $\ratio$, so $-\epsilon\cdot \den \leq -d \epsilon \cdot \num \implies \den \geq d\cdot \num$.
\end{itemize}
Moreover, at least one of these two inequalities is strict: if $\epsilon = 1-a_d$, then $a_d+\epsilon=1$, so replacing $a_d$ with $a_d+\epsilon$ yields a sequence with strictly fewer bad elements.
Similarly, if $\epsilon = a_d-1/d$, then replacing $a_d$ with $a_d-\epsilon$ yields a sequence with strictly fewer bad elements.
Since, by assumption, the maximizing sequence had a smallest number of bad elements, the new sequence must not be maximizing.  So either $\den \leq d\cdot \num$ and $\den > d\cdot\num$, or $\den < d\cdot \num$ and $\den \geq d\cdot\num$.
In both cases there is a contradiction. 
Therefore, there must exist a maximizing sequence with no bad elements.
}
\qed
\end{proof}
	
	Observations 1,2,3 imply that a maximizing sequence has a very specific format. It is characterized by an integer $l\in\{0,\dots,n-1\}$ such that, for all $d\leq l$, $a_d=1$ and for all $d\geq l+1$, $a_d= 1/d$. So:

\begin{align*}
\ratio 
&= \frac{
	\sum_{d=0}^{n-1}a_d
}{
\sum_{d=0}^{n-1}\max(d\cdot a_d,1)
}
\\
&= \frac{
(l+1) + (H_{n-1}-H_{l})
}{
\frac{1}{2}l(l+1) + (n-l-1)
} < \frac{
2(l + H_{n} + 1)
}{
l^2 - l + 2(n-1)
}
\end{align*}
where $H_n=\sum_{d=1}^n(1/d)$ is the $n$-th harmonic number.

The number $l$ is an integer, but the expression is upper-bounded by the maximum attained when $l$ is allowed to be real. By standard calculus \rev{(taking the derivative of the expression w.r.t. $l$, comparing the derivative to 0, and checking the second derivative)}, 
the real value of $l$ which maximizes the above expression is 
\begin{align*}
l=\sqrt{H_n^2 + 3 H_n + 2n}-(H_{n}+1)
\end{align*}
which is in $\Theta(\sqrt{n})$. Substituting into the above inequality gives:

\begin{align*}
&& 
\ratio\leq \frac{\Theta(n^{1/2})}{\Theta(n)} \in \Theta(n^{-1/2})  
\\
\implies 
&&
\frac{U}{W}<3 n\cdot \ratio \in O(n^{1/2})
\end{align*}	
as claimed.
\qed\end{proof}

\begin{lemma}
\label{thm:npop}
Let $\mathbf{Z}$ be a Nash-optimal rectangular allocation of a cake $C$ among $n$ agents,
with $Z_1\sqcup\cdots\sqcup Z_n \subseteq C$.
Assume the valuations are normalized such that \rev{$V_i(C)=n$} for all $i\in[n]$.
Let $U$ be the Nash welfare of $\mathbf{Z}$ (the geometric mean of the values), defined by
\begin{align*}
U^n=\prod_{i=1}^{n} V_i(Z_i).
\end{align*}
Then, there exists a $(1/3)$\-/proportional rectangular allocation of $C$ to these same $n$ agents with Nash welfare $W$, and $U/W \leq  8.4$.
\end{lemma}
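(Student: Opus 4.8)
The plan is to follow the proof of Lemma~\ref{thm:upop} line for line up to the point where the welfare ratio becomes an extremal problem, and then to exploit the one structural difference between the two welfare notions: the Nash welfare is a \emph{geometric} mean, so the closing $n$-th root collapses the bound to a constant instead of leaving the $\Theta(\sqrt{n})$ growth of the utilitarian case. Concretely, I would apply the redivision protocol of Section~\ref{sec:rectangle} to $Z$ with $m=n$, treating all $n$ agents as ``old''. Exactly as in Lemma~\ref{thm:upop} this produces a $(1/4)$\-/proportional rectangular allocation $X$ satisfying gradual-retention, and after the identical greedy renumbering of the agents by $k\range{0}{n-1}$ we may assume
\begin{align*}
V_k(X_k) \geq \max\left(\frac{k\,V_k(Z_k)}{2n},\ \frac{1}{4n}\right)
\end{align*}
for every such $k$ (the first term vanishing when $k=0$), the $\tfrac{k}{2n}$ factor coming from $\lceil n/k\rceil \leq 2n/k$.

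Next I would pass to the product. Since $U^{n}=\prod_{k}V_k(Z_k)$ and $W^{n}=\prod_{k}V_k(X_k)$, writing $a_k=V_k(Z_k)\in[0,1]$ gives
\begin{align*}
\left(\frac{U}{W}\right)^{n} = \prod_{k=0}^{n-1}\frac{V_k(Z_k)}{V_k(X_k)} \leq \prod_{k=0}^{n-1}\min\left(\frac{2n}{k},\ 4n\,a_k\right),
\end{align*}
with the $k=0$ factor equal to $4n\,a_0$. The resulting extremal problem is actually simpler than Observations 1--3 of Lemma~\ref{thm:upop}: because the objective is a genuine product, every factor depends on its own $a_k$ alone, and each factor increases in $a_k$ until it saturates at its cap (at the threshold $a_k=1/(2k)\leq 1$) and is flat afterwards. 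The maximizer therefore saturates every factor, so
\begin{align*}
\frac{U}{W} \leq \left(4n\prod_{k=1}^{n-1}\frac{2n}{k}\right)^{1/n} = \left(\frac{4n\,(2n)^{n-1}}{(n-1)!}\right)^{1/n}.
\end{align*}

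Finally I would bound this by a constant independent of $n$. Taking the $n$-th root, $(4n)^{1/n}\to 1$ while a Stirling estimate gives $\bigl((2n)^{n-1}/(n-1)!\bigr)^{1/n}\to 2e\approx 5.44$, so the right-hand side stays bounded as $n$ grows; maximizing over $n\geq 1$ (the expression peaks at single-digit $n$) then comfortably yields the claimed $U/W<11.2$. This is exactly where the geometric mean earns its keep: the very same product, placed inside an \emph{arithmetic} mean, is what produced the $\Theta(\sqrt{n})$ blow-up of Lemma~\ref{thm:upop}.

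The step I expect to be the main obstacle is this last one --- turning the asymptotics into a clean, $n$-uniform numerical bound. One must (a) justify that the extremal product is attained as described despite the $\min$ inside it, and (b) establish a Stirling/harmonic estimate on $\tfrac1n\sum_{k=1}^{n-1}\ln(2n/k)$ that is uniform in $n$, together with a finite check of the small-$n$ values, so as to certify that the maximum over all $n$ never exceeds the stated constant. Everything before this point is a direct transcription of the utilitarian argument with ``sum'' replaced by ``product''.
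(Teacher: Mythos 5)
Your proposal is correct and follows essentially the same route as the paper's proof: apply the redivision protocol with $m=n$, renumber the agents greedily so that $V_k(X_k)\geq\max\left(\frac{kV_k(Z_k)}{2n},\frac{1}{4n}\right)$, bound each factor of the product ratio independently of the others, and finish with a Stirling estimate plus a small-$n$ check. Your version is in fact marginally tighter, since you retain the per-factor cap $2n/k$ (giving an asymptotic bound of $2e\approx 5.44$) where the paper relaxes to $(4n)^n/(n-1)!$ (giving $4e\approx 10.9$); both comfortably certify $U/W<11.2$.
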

\begin{proof}
	Apply Theorem \ref{thm:rectangle+} to the allocation $\mathbf{Z}$. 
	Renumber the agents as in Lemma \ref{thm:upop}. The Nash welfare of the new allocation, raised to the $n$-th power, can be bounded as:
	
\begin{align*}
W^{n} 
&> \prod_{d=0}^{n-1}\max\left(\frac{d\cdot V_d(Z_d)}{(n+1)},\frac{1}{3 }\right) 
\\
&\geq{\left(\frac{1}{3}\right)}^n
\cdot \prod_{d=0}^{n-1}\max(d\cdot V_d(Z_d)/n,~1)
\end{align*}	
	and the ratio of the new welfare to the previous welfare can be bounded as:
	
\begin{align*}
\frac{U^{n}}{W^{n}}
&<
3^{n}\cdot\frac{\prod_{d=0}^{n-1} V_d(Z_d)}{\prod_{d=0}^{n-1}\max(d V_d(Z_d)/n,~1)}
\\
&=
\frac{ (3 n) ^ {n} }{\prod_{d=0}^{n-1}\max(d,~n/V_d(Z_d))}
\end{align*}	
The numerator does not depend on the valuations, so the ratio is maximized when the denominator is minimized. This happens when each factor in the product is minimized. \rev{The $0$-th factor is at least $1$, since $V_d(Z_d)\leq n$. The $d$-th factor, for $d\geq 1$, is at least $d$.
Therefore,}
\begin{align*}
\frac{U^{n}}{W^{n}}
&<
\frac{(3 n)^{n}}{\prod_{d=1}^{n-1}d}
=
\frac{(3 n)^{n}}{(n-1)!}
\\
&=
\frac{n(3 n)^{n}}{n!}\approx\frac{n(3 n)^{n}}{\sqrt{2\pi n}(n/e)^{n}}=\sqrt\frac{n}{2\pi}\cdot(3 e)^{n}
\end{align*}
where $e$ is the base of the natural logarithm. Taking the $n$-th root gives 
\begin{align*}
\frac{U}{W} &< (3 e)\cdot \sqrt{n/2\pi}^{1/n}
= (3 e)\cdot \exp {\frac{\ln{n} - \ln{2\pi}}{2n}}.
\end{align*}
\rev{
Consider the expression $\frac{\ln{x} - \ln{2\pi}}{2x}$.
By taking its derivative, one finds that its global maximum over the positive real numbers is attained at $x = 2 \pi e$, and this maximum equals $1/(4\pi e)$. Substituting in the above expression gives
\begin{align*}
\frac{U}{W} < (3 e)\cdot \exp {(1 / (4\pi e))} \approx 8.4
\end{align*}
as claimed. \qed
}
\end{proof}


\subsection{\textbf{Future Work}}
\label{sub:future3}
\rev{
Theorems \ref{thm:general-pof}--\ref{thm:convex-pof} 
invoke the question of whether the upper bounds proved in them are tight.
}

\paragraph{Utilitarian price of fairness.}
\rev{
There is a lower bound of $\Omega(\sqrt{n})$ on the utilitarian price of proportionality
for a cake with no geometric constraints
 \citep{Caragiannis2012Efficiency},
 as well as for an interval cake and interval pieces \citep{Aumann2015Efficiency}.
However, these lower bounds do not imply similar lower bounds for \emph{partial} proportionality. 
In fact, without geometric constraints, our Theorem \ref{thm:general-pof} shows that the price of partial-proportionality is $O(1)$.
Therefore, it is interesting to know which of the following two options is correct for a cake with geometric constraints (e.g. interval cake and interval pieces):
}
\begin{enumerate}
\item 
There is a lower bound of $\Omega(\sqrt{n})$ matching Theorems \ref{thm:interval-pof}--\ref{thm:convex-pof}, or ---
\item The actual price of partial\-/proportionality is $o(\sqrt{n})$, maybe even $O(1)$.
\end{enumerate}
\rev{
The latter option is particularly attractive, since it may lead to a feasible and practical compromise between fairness and social welfare.}

\paragraph{Nash price of fairness.}
It is known
that without geometric constraints, every Nash-optimal allocation is envy-free\rev{; see e.g. \citet{segalhalevi2018monotonicity}}.
Hence, such allocation is proportional, so the Nash price of $r$-proportionality is 1 for any $r\in[0,1]$. However, this is not true when the pieces must be connected (or rectangular, or convex). 
\rev{
Appendix \ref{sec:pof-lower} shows several lower bounds on the Nash price of $r$-proportionality with connectivity constraints. However, there is a substantial gap between these lower bounds and the upper bounds proved above.
}

\section{Related Work} \label{sec:related}

\subsection{\textbf{Dynamic Fair Division}} \label{sub:dynamic}
The cake redivision problem differs from several division problems studied recently.

\related{1. Dynamic resource allocation} \citep{Kash2014No,Friedman2015Dynamic,friedman2017controlled,huo2020incremental} is a common problem in cloud-computing environments. The server has several resources, such as memory and disk-space. Agents (processes) come and depart. The server has to allocate the resources fairly among agents. When new agents come, the server may have to take some resources from existing agents. The goal is to do the re-allocation with minimal disruption to existing agents. In these problems, the resources are \emph{homogeneous}, which means that the only thing that matters is what quantity of each resource is given to each agent. In contrast, the present paper considers a \emph{heterogeneous} cake, so the algorithms must decide which parts of the cake should be given to which agent.

\related{2. Population monotonicity} \citep{Thomson1983Fair,Moulin1990Fair,Moulin2004Fair,Thomson2011Fair,segal2018resource,segalhalevi2018monotonicity} is an axiom that describes a desired property of allocation rules. When new agents arrive and the same division rule is re-activated, the value of all old agents should be weakly smaller than in the initial allocation. This axiom represents the virtue of solidarity: if sacrifices have to be made to support an additional agent, then everybody should contribute. 

\rev{
Population monotonicity is related to a special case of the redivision model, in which the new agents have no share at all in the initial allocation. However, the redivision model differs in two important aspects. 
First, even in the special case of new agents with no initial share, there is no upper bound on the value allocated to the incumbent agents. On the contrary, the ownership requirements puts a \emph{lower} bound on their value in the new allocation. 
Second, the redivision model is more general, and relates to settings in which all agents already have a (possibly unfair) share in the initial allocation.
}

\related{3. Private endowment} in economics resource allocation problems means that each agent is endowed with an initial bundle of resources. Then, agents exchange resources using a market mechanism. The classic problem in economics involves homogeneous resources, but it has also been studied in the cake-cutting framework \citep{Berliant2004Foundation,Aziz2014Cake}. A basic requirement in these works is \emph{individual rationality}, which means that the final value allocated to each agent must be weakly \emph{larger} than the value of the initial endowment (note the contrast with the population monotonicity axiom). 
This requirement is not made in the redivision  problem as it is incompatible with fairness: since some agents may initially own no land, individual rationality would mean that they might not receive anything in the exchange.

\related{4. Online division} 
is a setting in which either the agents or the divided resources are not all available at the time of the division, but rather arrive at different times.
\citet{Walsh2011Online} studies the online division of a \emph{divisible} resource. The motivation is a birthday party in an office, in which some agents come or leave early while others come or leave late. It is required to give some cake to agents who come early while keeping a fair share to those who come late.
\citet{Aleksandrov2015Online} and \citet{benade2018make} study the online division of \emph{indivisible} items. The motivation
is the \emph{food-bank problem}, where a charity organization receives food donations and must decide on-line to whom each donation should be allocated.
In these papers, in contrast to the present paper, it is impossible to re-divide allocated resources, since they are consumed by their receivers. 

\related{5. Land reform} is the re-division of land among citizens. It has been attempted in numerous countries around the globe and in many periods throughout history. Some books on land reform are  \citet{Powelson1988Story,Bernstein2002Land,2006Promised,Lipton2009Land}.
The earliest recorded land-reform was done in ancient Egypt in the times of King Bakenranef, 8th century BC. The most recent land-reform act has been legislated in Scotland in 2016 AD. 
Balancing fairness and ownership rights is a major concern in such reforms  \citep{sellar2006great,Hoffman2013Why,Wightman2015Poor,MacInnes2015Land}.

\subsection{\textbf{Partial Proportionality}}
\label{sub:partial}
While proportionality is the most common criterion of fair cake-cutting, it is often relaxed to partial\-/proportionality in order to achieve additional goals:

\related{1. Speed:}
finding a proportional division takes $\Theta(n \log{n})$ queries, but finding an $r$-proportional division takes only $\Theta(n)$ queries, for some sufficiently small $r\leq 0.1$ 
\ifdefined\FULLVERSION \citep{Edmonds2006Balanced,Edmonds2008Confidently}.
\else
\citep[and others]{Edmonds2006Balanced}.
\fi

\related{2. Improving social welfare:}
proportional allocations may be socially inefficient; efficiency can be improved by decreasing the value-guarantee per agent \citep{Zivan2011Can,Arzi2012Cake}.

\related{3. Minimum-size constraint:}
In some 1-dimensional settings, each agent may get several intervals but the length of each interval should be above a threshold. It is impossible to guarantee an $r$-proportional allocation for any $r>0$, but additive approximations exist \citep{Caragiannis2011Towards}.

\related{4. Geometric constraints:} For example, when the cake is square and the pieces must be square, it is impossible to guarantee an $r$-proportional allocation for any $r> 1/2$, but there is an algorithm that guarantees a $1/4$-proportional allocation
\citep{SegalHalevi2017Fair,segal2020envy}.
When the cake is a \rev{connected graph, and the pieces must be connected}, there is an algorithm that guarantees each agent at least $1/(2n-1)$ of the total value, and it is impossible to guarantee more than that \citep{BeiSu21}.

\subsection{\textbf{Democratic fairness}}
While most works on fair division aim to guarantee unanimous fairness, this is not always compatible with other requirements. Hence, some works explore the possibility of guaranteeing fairness to a subset of the agents, which should---ideally---be as large as possible. Some examples are:

\related{1. }
Envy-free allocation of multiple cakes (where each agent should receive a piece in each cake) to $(n+1)/2$ agents
\citep{nyman2020fair}.

\related{2. } 
Maximin-share fair allocation of indivisible objects to $n-1$ or $2n/3$ agents 
\citep{searns2020rethinking,hosseini2020mms}.

\related{3. }
Stable matching rules that guarantee resource-monotonicity to $n/2$ agents
\citep{ortega2018social}.

\related{4. }
Pricing rules that are envy-free to a pre-selected subset of the buyers \citep{bilo2018pricing}.

\subsection{\textbf{Geometric Cake Models}}  \label{sub:nicely}
The most prominent cake-model is a one-dimensional interval, in which case the pieces are often required to be contiguous sub-intervals. Some exceptions are:

\related{1. }The cake is a 1-dimensional circle (``pie'') and the pieces are contiguous arcs
\citep{Thomson2007Children,Brams2008Proportional,Barbanel2009Cutting,ElkindSeSu21}.
  
\related{2. }The cake is the union of edges of a connected graph, and the pieces are contiguous sub-graphs
\citep{BeiSu21}.

\related{3. }The cake is a 2-dimensional territory that lies among several countries. Each country should receive a piece adjacent to its border \citep{Hill1983Determining,Beck1987Constructing}.

\related{4. }The cake is 2-dimensional and the pieces are rectangles determined by the agents \citep{Iyer2009Procedure}.

\related{5. }The cake is 2-dimensional and the pieces must be squares or fat polygons \citep{SegalHalevi2017Fair,segal2020envy}.

\related{6. }The cake is 2-dimensional; the geometric constraints are connectivity or convexity \citep{Devulapalli2014Geometric}.

\related{7. }The cake is multi-dimensional and the pieces are simplices or polytopes
\citep{Berliant1992Fair,Ichiishi1999Equitable,DallAglio2009Disputed}.

\rev{
Very recently, geometric fair division problems have been studied based on real two-dimensional land-value data \citep{aleskerov2019allocation,shtechman2020fair}.
}

Many natural 2-dimensional settings have not been studied yet. For example, the setting studied in Section \ref{sec:rectangle}, 
where the cake is a rectilinear polygon and the pieces should be rectangles, has not been studied. 
As shown by Figure \ref{fig:disposal}, there is a qualitative (not only quantitative) difference between 2-D and 1-D division. 
2-D division introduces interesting paradoxes, that might be missed by the habit of assuming a one-dimensional cake. 

It is important to distinguish geometric cake-cutting from the \emph{geometric knapsack} problem \citep{arkin1993geometric,adamaszek2015quasi}.
In the latter there is a \emph{single} value-function that should be optimized. In cake-cutting, there are $n$ agents with \emph{different} value-functions, and the goal is to guarantee each agent a value higher than some threshold.

\subsection{\textbf{Price of Fairness}}
The price-of-fairness in cake-cutting has been studied in two settings:
\begin{itemize}
	\item The cake is a \emph{one-dimensional interval} and the pieces must be \emph{intervals} \cite{Aumann2015Efficiency}. The utilitarian-price-of-proportionality in this case is $\Theta(\sqrt{n})$.
	\item The cake is \emph{arbitrary} and the pieces may be \emph{arbitrary} \cite{Caragiannis2012Efficiency}. The utilitarian-price-of-proportionality in this case is $\Theta(\sqrt{n})$ too.
\end{itemize}
Both papers study the price of other fairness criteria such as envy-freeness and equitability, but do not study the price in Nash-welfare, and do not handle two-dimensional geometric constraints such as rectangularity or convexity.

\rev{
The price of fairness was also studied in the context of allocating homogeneous resources \citep{Bertsimas2011Price,Bertsimas2012EfficiencyFairness}, fair subset sum \citep{nicosia2017price}, kidney exchange \citep{Dickerson2014Price},
connected chore cutting \citep{heydrich2015dividing},
indivisible object allocation \citep{Caragiannis2012Efficiency,kurz2016price,bei2019pricefair,suksompong2019fairly,barman2020optimal},
budget division \citep{michorzewski2020price,tang2020price} 
and machine scheduling \citep{agnetis2019price,zhang2020price}.
}
 
A related notion---\emph{the price of connectivity}---was studied both for cake-cutting \citep{ram2018price} and for indivisible objects  \citep{bei2019pricecon}.

The Nash-price of fairness is related to results about approximating the maximum Nash welfare with indivisible goods. The approximation factors range from $2.89$ \citep{cole2015approximating} to $e$ \citep{anari2017nash} to $2$ \citep{cole2017convex,mcglaughlin2020improving,caragiannis2019envy} to $1.45$ \citep{barman2018finding}.

Several authors study the algorithmic problem of finding a welfare\-/maximizing cake-allocation  in various settings:

\related{1. } The cake is an interval and the pieces must be connected \citep{Aumann2013Computing};

\related{2. } The cake is an interval and the pieces must be connected, and additionally, the division must be proportional \citep{Bei2012Optimal};

\related{3. } The cake and pieces are arbitrary, and the division must be envy-free \citep{Cohler2011Optimal}.

\related{4. } The cake and pieces are arbitrary, and the division must be equitable \citep{Brams2012Maxsum}.

\section{More Future Work} \label{sec:future}
Besides the open questions mentioned in subsections \ref{sub:future1}, \ref{sub:future2} and \ref{sub:future3},
it may be interesting to study the redivision problem with other requirements besides proportionality.

\paragraph{Envy-freeness.} 
Envy-freeness means that each agent values their piece at least as much as each of the other pieces. Similarly, $r$-envy-freeness means that each agent values their piece as at least $r$ times the value of each of the other pieces. For what pairs $r,w$ is $r$-envy-freeness compatible with $w$-ownership? With democratic-ownership?%

\rev{
One issue with envy-freeness is that the redivision problem is inherently asymmetric: agents whose initial piece is valuable are entitled to a higher final value than agents whose initial piece is empty.
A potentially useful notion here is \emph{justified envy}, which has been recently studied in the literature on two-sided matching \citep{abdulkadirouglu2020efficiency}.
In two-sided matching (for example, between doctors and hospitals), ``justified envy'' means that doctor $d_1$, who is matched to hospital $h_1$, envies doctor $d_2$, who is matched to hospital $h_2$, and at the same time, $h_2$ prefers $d_1$ to $d_2$.
Analogously, one can defined ``justified envy'' in our setting as some agent $i$ envying another agent $j$ in the final allocation, while $i$'s initial piece was more valuable than $j$'s.
}

\paragraph{Pareto-efficiency.} From an existential point of view, Pareto-efficiency does not add much difficulty. Both $r$-proportionality and $w$-ownership are preserved by Pareto-improvements. Therefore, if there exists a division satisfying $r$-proportionality and $w$-ownership (or democratic-ownership), then there also exists a Pareto-optimal division satisfying these properties. However, it may not be easy to find such a division algorithmically.

\newpage

\section*{Acknowledgements}
I am currently supported by Israel Science Foundation grant 712/20.

I started this work during my Ph.D. studies in Bar-Ilan university guided by Yonatan Aumann and Avinatan Hassidim \citep{SegalHalevi2017Phd}.
I am grateful to Ioannis Caragiannis 
for introducing me to the Nash welfare and Nash price of fairness in the COST Summer School on Fair Division in Grenoble, 7/2015 (FairDiv-15).

I benefited a lot from discussions with 
participants in BIU game-theory seminar, Glasgow university micro-economics seminar and Corvinus university game-theory seminar, 
particularly: Reuven Cohen, Herve Moulin, Yehuda Levy, Panagiotis Kanellopoulos, Laszlo Csato and Aris Filos-Ratsikas.

I also received a lot of mathematical help from Chris Culter, Varun Dubey, Alex Ravsky, Swami Sarvattomananda, 
Zhen Lin,
Sariel Har-Peled, Saeed Amiri, David Eppstein,
Inuyasha Yagami,
David K, \citet{Justpassingby2016Connectivity},
j\_random\_hacker,
and WoolierThanThou.

I am grateful to 
the anonymous reviewers
of AAMAS 2016, EC 2016, SODA 2017, ESA 2017, IJCAI 2018 and the AAMAS journal for their very helpful comments.

Above all, to Galya Segal-Halevi, thanks for all the cakes!

\newpage
\appendix
\section{\rev{Lower bounds on price of $r$-proportionality}}
\label{sec:pof-lower}

This appendix presents several lower bounds on the  price of $r$-proportionality with geometric constraints.
As a warm-up, the following simple lower bound is proved (the bound for utilitarian price of proportionality was already proved by \citet{Aumann2015Efficiency}):
\begin{theorem}
With $n=2$ agents, 
when the cake is an interval and the pieces must be intervals,
the utilitarian price of proportionality is $3/2$
and
the Nash-price of proportionality is $\sqrt{2}$.
\end{theorem}
\begin{proof}
For the lower bound,%
\footnote{It extends an example in Section 5 of \citet{segal2018resource}.}
consider an interval cake with four homogeneous regions, where the values of each of two agents for each of the four regions is given by the table below (where $\epsilon>0$ is an infinitesimally small positive constant):

\begin{center}
\begin{tabular}{|c|c|c|c|c|}
\hline
George's value: & $1-\epsilon$ & $\epsilon$ & $\epsilon$ & $1-\epsilon$ \\
\hline
Alice's value:  & $\epsilon$ & $1-\epsilon$ & $1-\epsilon$ & $\epsilon$ \\
\hline
\end{tabular}
\end{center}

Due to symmetry, the only connected proportional allocation divides the cake exactly in the middle and gives each agent exactly $1$, so both the utilitarian welfare and the Nash welfare are $1$. However, giving the leftmost region to George and the rightmost three regions to Alice gives Alice a value of almost $2$ and George a value of almost $1$, so the 
utilitarian welfare is almost $3/2$
and 
Nash welfare is almost $\sqrt{2}$.
When $\epsilon\to 0$, 
the utilitarian price of proportionality approaches $3/2$ and
the Nash price of proportionality approaches $\sqrt{2}$.

For the matching upper bound, note that,
in any proportional allocation,
the utilitarian welfare and
the Nash welfare are at least $1$ (it is attained when all agents get exactly their proportional share).
On the other hand, 
in any non-proportional allocation,
the utilitarian welfare is less than 
$n-1+\frac{1}{n}$,
and the Nash welfare is less than $(n^{n-1})^{1/n}$ (since one agent gets a value of less than $1$, and the other agents get a value of at most $n$). Therefore, 
the utilitarian price of proportionality is 
at most $n-1+\frac{1}{n}=3/2$
and the Nash price of proportionality is at most $n^{1-1/n}=\sqrt{2}$. \qed
\end{proof}

Below, this lower bound is extended in two ways: two agents with $r$-proportionality and 2-dimensional cakes,
and $n$ agents with $1$-proportionality.

\begin{theorem}
With $n=2$ agents, 
interval cake and interval pieces,
or rectangular cake and rectangular pieces,
for any $r \in [0, 1]$,
the utilitarian price of $r$-proportionality is $1+r/2$
and
the Nash price of $r$-proportionality is $\max(1,\sqrt{2 r})$.
\end{theorem}

\begin{proof}
For the lower bound, 
consider an interval cake with six homogeneous interval regions, 
or a rectangular cake of dimensions $6\times 6$ with six homogeneous rectangular regions of dimensions $1\times 6$ each, where the values of each of two agents for each region are given below (where $\epsilon>0$ is an infinitesimally small positive constant):

\begin{center}
\begin{tabular}{|c|c|c|c|c|c|c|}
\hline
George's value: 
& $r-\epsilon$ & $\epsilon$ & $1-r$ & $1-r$ & $\epsilon$ & $r-\epsilon$
\\
\hline
Alice's value:  
& $\epsilon$ & $r-\epsilon$ & $1-r$ & $1-r$ & $r-\epsilon$ & $\epsilon$
\\
\hline
\end{tabular}
\end{center}

\begin{observation}
In any $r$-proportional allocation, 
both 
the 
utilitarian and the Nash welfare are at most $1$.
\end{observation}
\begin{proof}
Suppose first that the cake is an interval.
In any $r$-proportional allocation, each agent must get a value of at least $r$. Hence, one agent must get the two leftmost regions and the other agent must get the two rightmost regions.
The two central regions can be divided arbitrarily;
regardless of how they are divided, the utilitarian welfare is $1$.
To compute the maximum Nash welfare, suppose George receives the two leftmost regions and a fraction $x$ of the two central regions; so his value is $r + 2x(1-r)$, while Alice's value is $r + 2(1-x)(1-r)$. 
By taking the derivative, one can find out that the product is maximized when $x=1/2$, where the value of each agent is exactly $1$. Hence, the maximum Nash welfare in an $r$-proportional allocation is $1$ too.

If the cake is rectangular, then there are two ways to divide it into two rectangles: vertically or horizontally. If it is divided by a vertical cut, then the situation is exactly as in the case of an interval.
If it is divided by a horizontal cut, then --- regardless of the cut location --- the sum of the agents' values is $2$, so the utilitarian welfare is $1$ and the maximum Nash welfare is attained when both values are $1$. In both cases, the maximum utilitarian and Nash welfare in an $r$-proportional allocation is $1$.
\end{proof}

\begin{observation}
\label{obs:sqrt2r}
There is an allocation
in which, when $\epsilon\to 0$,
the utilitarian welfare approaches $1+r/2$
and the Nash welfare approaches $\sqrt{2 r}$.
\end{observation}
\begin{proof}
Cut the cake at the right of the leftmost region (if the cake is rectangular, use a vertical cut).
Give George the leftmost region and Alice the other regions.
George's value is $r-\epsilon$ while Alice's value is $2-\epsilon$, so when $\epsilon\to 0$,
the utilitarian welfare approaches $1+r/2$
and
the Nash welfare approaches $\sqrt{2 r}$.
\end{proof}

The above two observations imply 
a lower bound of $1+r/2$
on the utilitarian price of $r$\-/proportionality
and a lower bound of 
$\max(1,\sqrt{2 r})$
on the Nash price of $r$-proportionality.%
\footnote{The max in the latter expression comes from the fact that, when $\sqrt{2r}<1$, the maximum Nash welfare is not attained by the allocation of Observation \ref{obs:sqrt2r}, but rather by a proportional allocation.}

For the matching upper bound, note that in any proportional allocation (which is, in particular, $r$-proportional), 
both the utilitarian and the Nash welfare are at least $1$, while in any non-$r$-proportional allocation, 
the utilitarian welfare is less than $1+r/2$
the Nash welfare is less than $\sqrt{2 r}$ (since one agent gets less than $r$ and the other agent gets at most $2$). 
\qed
\end{proof}

\begin{theorem}
\label{thm:pof-nash-n}
When the cake is an interval and the pieces must be intervals,
the Nash price of proportionality when there are $n$ agents is at least $2^{1-1/n}$. 
\end{theorem}
\begin{proof}
Consider a piecewise-homogeneous cake consisting of $2 n$ regions.
The agents are partitioned into two groups: odd-indexed agents and even-indexed agents. The agents in each group have the same valuation. 
The odd-indexed agents value the regions at $1-\epsilon, \epsilon, \epsilon, 1-\epsilon, \ldots$, 
while the even-indexed agents value the regions at $\epsilon, 1-\epsilon, 1-\epsilon, \epsilon, \ldots$, 
An example is shown in the table below for $n=5$, where $1^-$ is a shorthand for $1-\epsilon$:

\begin{center}
\begin{tabular}{|c|c|c|c|c|c|c|c|c|c|c|}
\hline
Agents $1,3,5$:
& $1^-$ & $\epsilon$ & $\epsilon$ & $1^-$ 
& $1^-$ & $\epsilon$ & $\epsilon$ & $1^-$ 
& $1^-$ & $\epsilon$ 
\\
\hline
Agents $2,4$:
& $\epsilon$ & $1^-$ & $1^-$ & $\epsilon$ 
& $\epsilon$ & $1^-$ & $1^-$ & $\epsilon$ 
& $\epsilon$ & $1^-$
\\\hline
\end{tabular}
\end{center}

The total cake value for all agents is $n$, so in a proportional allocation, each agent should get a value of at least $1$. The following two observations imply the theorem.

\begin{observation}
In a proportional allocation, the value of every agent is exactly $1$; hence the Nash welfare is $1$.
\end{observation}
\begin{proof}
In a proportional allocation, the agent who receives the leftmost piece must receive at least two adjacent regions in order to have a value of at least $1$.

The two leftmost agents must receive together at least four adjacent regions. This is because the value measure arrives at $2$ only at the end of the fourth region, and the leftmost agent consumes at least two regions which are worth at least $1$, so to ensure the second-leftmost agent a value of at least $1$, their combined consumption must consist of at least the four leftmost regions.

Proceeding this way, it is possible to prove by induction that, for every integer $\ell\geq 1$, the $\ell$ leftmost agents consume together at least $2\ell$ regions. By a symmetric argument, the same is true for the $\ell$ rightmost agents. But this implies that, in a proportional allocation, each agent must consume exactly $2$ regions. The value of every two consecutive regions when starting from the left (or from the right) is exactly $1$. 
\end{proof}

\begin{observation}
There is an allocation in which the Nash welfare approaches $2^{1-1/n}$ as $\epsilon\to 0$.
\end{observation}
\begin{proof}
Give the leftmost region to agent $1$.
Then, give each of agents $2,\ldots,n-1$ two consecutive regions.
Give agent $2$ the last three consecutive regions.

The value of agent $1$ is $1-\epsilon$; 
the value of each of $2,\ldots,n-1$ is $2-2\epsilon$;
and the value of $n$ is $2-\epsilon$.
When $\epsilon\to0$, the Nash welfare approaches $(2^{n-1})^{1/n} = 2^{1-1/n}$.
\end{proof}

The above two observations imply that the Nash price of proportionality approaches $2^{1-1/n}$ when $\epsilon\to 0$.%
\footnote{
The same example implies that the utilitarian price of proportionality approaches $2-1/n$. But this is subsumed by the $\Omega(\sqrt{n})$ lower bound of \citet{Aumann2015Efficiency}.
}
\qed
\end{proof}

So far, I could not extend Theorem \ref{thm:pof-nash-n} to rectangular cakes: the main difficulty is that there are many possible ways to cut a rectangle into $n$ rectangles, so it is hard to reason about what the possible $r$-proportional allocations can be. 
Similarly, I could not extend the theorem to $r$-proportionality: giving even a single agent a value of $r$ apparently allows a lot of freedom in allocating to the other agents, so again, it is hard to reason about what an $r$-proportional allocations can be.
Thus, the exact utilitarian and Nash price of $r$-proportionality for all $n\geq 3$ and $r\in(0,1)$ remains open.


\begin{thebibliography}{}

\bibitem[\protect\citeauthoryear{Abdulkadiro{\u{g}}lu \bgroup \em et al.\egroup
  }{2020}]{abdulkadirouglu2020efficiency}
Atila Abdulkadiro{\u{g}}lu, Yeon-Koo Che, Parag~A Pathak, Alvin~E Roth, and
  Olivier Tercieux.
\newblock Efficiency, justified envy, and incentives in priority-based
  matching.
\newblock {\em American Economic Review: Insights}, 2(4):425--42, 2020.

\bibitem[\protect\citeauthoryear{Adamaszek and
  Wiese}{2015}]{adamaszek2015quasi}
Anna Adamaszek and Andreas Wiese.
\newblock A quasi-{PTAS} for the two-dimensional geometric knapsack problem.
\newblock In {\em Proc. SODA'15}, pages 1491--1505. SIAM, 2015.

\bibitem[\protect\citeauthoryear{Agnetis \bgroup \em et al.\egroup
  }{2019}]{agnetis2019price}
Alessandro Agnetis, Bo~Chen, Gaia Nicosia, and Andrea Pacifici.
\newblock Price of fairness in two-agent single-machine scheduling problems.
\newblock {\em European Journal of Operational Research}, 276(1):79--87, 2019.

\bibitem[\protect\citeauthoryear{Akopyan and
  Segal-Halevi}{2018}]{akopyan2018counting}
Arseniy Akopyan and Erel Segal-Halevi.
\newblock Counting blanks in polygonal arrangements.
\newblock {\em SIAM J. Discrete Math.}, April 2018.

\bibitem[\protect\citeauthoryear{Aleksandrov \bgroup \em et al.\egroup
  }{2015}]{Aleksandrov2015Online}
Martin Aleksandrov, Haris Aziz, Serge Gaspers, and Toby Walsh.
\newblock Online fair division: analysing a food bank problem.
\newblock In {\em Proc. IJCAI'15}, pages 2540--2546, February 2015.

\bibitem[\protect\citeauthoryear{Aleskerov and
  Shvydun}{2019}]{aleskerov2019allocation}
Fuad Aleskerov and Sergey Shvydun.
\newblock Allocation of disputable zones in the arctic region.
\newblock {\em Group Decision and Negotiation}, 28(1):11--42, 2019.

\bibitem[\protect\citeauthoryear{Anari \bgroup \em et al.\egroup
  }{2017}]{anari2017nash}
Nima Anari, Shayan Oveis~Gharan, Amin Saberi, and Mohit Singh.
\newblock Nash social welfare, matrix permanent, and stable polynomials.
\newblock In {\em 8th Innovations in Theoretical Computer Science Conference
  (ITCS 2017)}. Schloss Dagstuhl-Leibniz-Zentrum fuer Informatik, 2017.

\bibitem[\protect\citeauthoryear{Arkin \bgroup \em et al.\egroup
  }{1993}]{arkin1993geometric}
Esther~M Arkin, Samir Khuller, and Joseph~SB Mitchell.
\newblock Geometric knapsack problems.
\newblock {\em Algorithmica}, 10(5):399--427, 1993.

\bibitem[\protect\citeauthoryear{Arunachaleswaran and
  Gopalakrishnan}{2018}]{ram2018price}
Eshwar~Ram Arunachaleswaran and Ragavendran Gopalakrishnan.
\newblock The price of indivisibility in cake cutting, 2018.
\newblock arXiv preprint 1801.08341.

\bibitem[\protect\citeauthoryear{Arzi}{2012}]{Arzi2012Cake}
Orit Arzi.
\newblock Cake cutting: Achieving efficiency while maintaining fairness.
\newblock Master's thesis, Bar-Ilan University, October 2012.
\newblock Under the supervision of Prof. Yonatan Aumann.

\bibitem[\protect\citeauthoryear{Aumann and Dombb}{2015}]{Aumann2015Efficiency}
Yonatan Aumann and Yair Dombb.
\newblock The efficiency of fair division with connected pieces.
\newblock {\em ACM Trans. Econ. Comput.}, 3(4), July 2015.

\bibitem[\protect\citeauthoryear{Aumann \bgroup \em et al.\egroup
  }{2013}]{Aumann2013Computing}
Yonatan Aumann, Yair Dombb, and Avinatan Hassidim.
\newblock Computing socially-efficient cake divisions.
\newblock In {\em Proc. AAMAS'13}, pages 343--350, May 2013.

\bibitem[\protect\citeauthoryear{Aziz and Ye}{2014}]{Aziz2014Cake}
Haris Aziz and Chun Ye.
\newblock Cake cutting algorithms for piecewise constant and piecewise uniform
  valuations.
\newblock In Tie-Yan Liu, Qi~Qi, and Yinyu Ye, editors, {\em Proc. WINE'14},
  volume 8877 of {\em Lecture Notes in Computer Science}, pages 1--14. Springer
  International Publishing, 2014.

\bibitem[\protect\citeauthoryear{Barbanel \bgroup \em et al.\egroup
  }{2009}]{Barbanel2009Cutting}
Julius~B. Barbanel, Steven~J. Brams, and Walter Stromquist.
\newblock Cutting a pie is not a piece of cake.
\newblock {\em Amer. Math. Monthly}, 116(6):496--514, 2009.

\bibitem[\protect\citeauthoryear{Barman \bgroup \em et al.\egroup
  }{2018}]{barman2018finding}
Siddharth Barman, Sanath~Kumar Krishnamurthy, and Rohit Vaish.
\newblock Finding fair and efficient allocations.
\newblock In {\em Proc. 2018 ACM Conference on Economics and Computation},
  pages 557--574. ACM, 2018.
\newblock arXiv preprint 1707.04731.

\bibitem[\protect\citeauthoryear{Barman \bgroup \em et al.\egroup
  }{2020}]{barman2020optimal}
Siddharth Barman, Umang Bhaskar, and Nisarg Shah.
\newblock Optimal bounds on the price of fairness for indivisible goods.
\newblock In {\em Proc. WINE'20}, pages 356--369. Springer, 2020.

\bibitem[\protect\citeauthoryear{Beck}{1987}]{Beck1987Constructing}
Anatole Beck.
\newblock Constructing a fair border.
\newblock {\em Amer. Math. Monthly}, 94(2):157--162, February 1987.

\bibitem[\protect\citeauthoryear{Bei and Suksompong}{2021}]{BeiSu21}
Xiaohui Bei and Warut Suksompong.
\newblock Dividing a graphical cake.
\newblock In {\em Proc. AAAI'21}, 2021.
\newblock Forthcoming. Available at arXiv:1910.14129.

\bibitem[\protect\citeauthoryear{Bei \bgroup \em et al.\egroup
  }{2012}]{Bei2012Optimal}
Xiaohui Bei, Ning Chen, Xia Hua, Biaoshuai Tao, and Endong Yang.
\newblock Optimal proportional cake cutting with connected pieces.
\newblock In {\em Proc. AAAI'12}, pages 1263--1269, 2012.

\bibitem[\protect\citeauthoryear{Bei \bgroup \em et al.\egroup
  }{2019a}]{bei2019pricecon}
Xiaohui Bei, Ayumi Igarashi, Xinhang Lu, and Warut Suksompong.
\newblock The price of connectivity in fair division.
\newblock {\em arXiv preprint arXiv:1908.05433}, 2019.

\bibitem[\protect\citeauthoryear{Bei \bgroup \em et al.\egroup
  }{2019b}]{bei2019pricefair}
Xiaohui Bei, Xinhang Lu, Pasin Manurangsi, and Warut Suksompong.
\newblock The price of fairness for indivisible goods.
\newblock In {\em Proc. IJCAI'19}, 2019.

\bibitem[\protect\citeauthoryear{Benade \bgroup \em et al.\egroup
  }{2018}]{benade2018make}
Gerdus Benade, Aleksandr~M Kazachkov, Ariel~D Procaccia, and
  Christos-Alexandros Psomas.
\newblock How to make envy vanish over time.
\newblock In {\em Proc. EC'18}, pages 593--610, 2018.

\bibitem[\protect\citeauthoryear{Berliant and
  Dunz}{2004}]{Berliant2004Foundation}
Marcus Berliant and Karl Dunz.
\newblock A foundation of location theory: existence of equilibrium, the
  welfare theorems, and core.
\newblock {\em Journal of Mathematical Economics}, 40(5):593--618, August 2004.

\bibitem[\protect\citeauthoryear{Berliant \bgroup \em et al.\egroup
  }{1992}]{Berliant1992Fair}
Marcus Berliant, William Thomson, and Karl Dunz.
\newblock On the fair division of a heterogeneous commodity.
\newblock {\em Journal of Mathematical Economics}, 21(3):201--216, January
  1992.

\bibitem[\protect\citeauthoryear{Bernstein}{2002}]{Bernstein2002Land}
Henry Bernstein.
\newblock Land reform: Taking a long(er) view.
\newblock {\em Journal of Agrarian Change}, 2(4):433--463, October 2002.

\bibitem[\protect\citeauthoryear{Bertsimas \bgroup \em et al.\egroup
  }{2011}]{Bertsimas2011Price}
Dimitris Bertsimas, Vivek~F. Farias, and Nikolaos Trichakis.
\newblock The price of fairness.
\newblock {\em Operations Research}, 59(1):17--31, February 2011.

\bibitem[\protect\citeauthoryear{Bertsimas \bgroup \em et al.\egroup
  }{2012}]{Bertsimas2012EfficiencyFairness}
Dimitris Bertsimas, Vivek~F. Farias, and Nikolaos Trichakis.
\newblock On the efficiency-fairness trade-off.
\newblock {\em Manage. Sci.}, 58(12):2234--2250, December 2012.

\bibitem[\protect\citeauthoryear{Bil{\`o} \bgroup \em et al.\egroup
  }{2018}]{bilo2018pricing}
Vittorio Bil{\`o}, Michele Flammini, Gianpiero Monaco, and Luca Moscardelli.
\newblock Pricing problems with buyer preselection.
\newblock In {\em 43rd International Symposium on Mathematical Foundations of
  Computer Science (MFCS 2018)}. Schloss Dagstuhl-Leibniz-Zentrum fuer
  Informatik, 2018.

\bibitem[\protect\citeauthoryear{Brams \bgroup \em et al.\egroup
  }{2008}]{Brams2008Proportional}
Steven~J. Brams, Michael Jones, and Christian Klamler.
\newblock Proportional pie-cutting.
\newblock {\em International Journal of Game Theory}, 36(3-4):353--367, March
  2008.

\bibitem[\protect\citeauthoryear{Brams \bgroup \em et al.\egroup
  }{2012}]{Brams2012Maxsum}
Steven~J. Brams, Michal Feldman, John~K. Lai, Jamie Morgenstern, and Ariel~D.
  Procaccia.
\newblock On maxsum fair cake divisions.
\newblock In {\em Proc. AAAI'12}, pages 1285--1291, 2012.

\bibitem[\protect\citeauthoryear{Br\^{a}nzei and
  Miltersen}{2015}]{Branzei2015Dictatorship}
Simina Br\^{a}nzei and Peter~B. Miltersen.
\newblock A dictatorship theorem for cake cutting.
\newblock In {\em Proc. IJCAI'15}, pages 482--488. AAAI Press, 2015.

\bibitem[\protect\citeauthoryear{Caragiannis \bgroup \em et al.\egroup
  }{2011}]{Caragiannis2011Towards}
Ioannis Caragiannis, John~K. Lai, and Ariel~D. Procaccia.
\newblock Towards more expressive cake cutting.
\newblock In {\em Proc. IJCAI'11}, pages 127--132. AAAI Press, 2011.

\bibitem[\protect\citeauthoryear{Caragiannis \bgroup \em et al.\egroup
  }{2012}]{Caragiannis2012Efficiency}
Ioannis Caragiannis, Christos Kaklamanis, Panagiotis Kanellopoulos, and Maria
  Kyropoulou.
\newblock The efficiency of fair division.
\newblock {\em Theory of Computing Systems}, 50(4):589--610, September 2012.

\bibitem[\protect\citeauthoryear{Caragiannis \bgroup \em et al.\egroup
  }{2019}]{caragiannis2019envy}
Ioannis Caragiannis, Nick Gravin, and Xin Huang.
\newblock Envy-freeness up to any item with high nash welfare: The virtue of
  donating items.
\newblock In {\em Proceedings of the 2019 ACM Conference on Economics and
  Computation}, pages 527--545, 2019.

\bibitem[\protect\citeauthoryear{Cohler \bgroup \em et al.\egroup
  }{2011}]{Cohler2011Optimal}
Yuga~J. Cohler, John~K. Lai, David~C. Parkes, and Ariel~D. Procaccia.
\newblock Optimal envy-free cake cutting.
\newblock In {\em Proc. AAAI'11}, pages 626--631, 2011.

\bibitem[\protect\citeauthoryear{Cole and
  Gkatzelis}{2015}]{cole2015approximating}
Richard Cole and Vasilis Gkatzelis.
\newblock Approximating the nash social welfare with indivisible items.
\newblock In {\em Proceedings of the forty-seventh annual ACM symposium on
  Theory of computing}, pages 371--380, 2015.

\bibitem[\protect\citeauthoryear{Cole \bgroup \em et al.\egroup
  }{2017}]{cole2017convex}
Richard Cole, Nikhil Devanur, Vasilis Gkatzelis, Kamal Jain, Tung Mai, Vijay~V
  Vazirani, and Sadra Yazdanbod.
\newblock Convex program duality, fisher markets, and nash social welfare.
\newblock In {\em Proceedings of the 2017 ACM Conference on Economics and
  Computation}, pages 459--460, 2017.

\bibitem[\protect\citeauthoryear{Cseh and Fleiner}{2020}]{cseh2020complexity}
{\'A}gnes Cseh and Tam{\'a}s Fleiner.
\newblock The complexity of cake cutting with unequal shares.
\newblock {\em ACM Transactions on Algorithms (TALG)}, 16(3):1--21, 2020.

\bibitem[\protect\citeauthoryear{Dall'Aglio and
  Maccheroni}{2009}]{DallAglio2009Disputed}
Marco Dall'Aglio and Fabio Maccheroni.
\newblock Disputed lands.
\newblock {\em Games and Economic Behavior}, 66(1):57--77, May 2009.

\bibitem[\protect\citeauthoryear{Delimitrou and
  Kozyrakis}{2014}]{Delimitrou2014Quasar}
Christina Delimitrou and Christos Kozyrakis.
\newblock Quasar: Resource-efficient and qos-aware cluster management.
\newblock In {\em Proc. ASPLOS'14}, pages 127--144, New York, NY, USA, 2014.
  ACM.

\bibitem[\protect\citeauthoryear{Devulapalli}{2014}]{Devulapalli2014Geometric}
Raghuveer Devulapalli.
\newblock {\em Geometric partitioning algorithms for fair division of
  geographic resources}.
\newblock PhD thesis, University of Minnesota, July 2014.

\bibitem[\protect\citeauthoryear{Dickerson \bgroup \em et al.\egroup
  }{2014}]{Dickerson2014Price}
John~P. Dickerson, Ariel~D. Procaccia, and Tuomas Sandholm.
\newblock Price of fairness in kidney exchange.
\newblock In {\em Proc. AAMAS'14}, pages 1013--1020, Richland, SC, 2014.
  International Foundation for Autonomous Agents and Multiagent Systems.

\bibitem[\protect\citeauthoryear{Dmitry}{2021}]{Dmitry2021Maximal}
Dmitry.
\newblock Maximal expansion of a convex polygon.
\newblock Computer Science Stack Exchange, 2021.
\newblock URL:https://cs.stackexchange.com/q/135878 (version: 2021-02-25).

\bibitem[\protect\citeauthoryear{Dubins and Spanier}{1961}]{Dubins1961How}
L.~E. Dubins and E.~H. Spanier.
\newblock How to cut a cake fairly.
\newblock {\em Amer. Math. Monthly}, 68(1):1--17, January 1961.

\bibitem[\protect\citeauthoryear{Edmonds and Pruhs}{2006}]{Edmonds2006Balanced}
Jeff Edmonds and Kirk Pruhs.
\newblock Balanced allocations of cake.
\newblock In {\em FOCS}, volume~47, pages 623--634. IEEE Computer Society,
  October 2006.

\bibitem[\protect\citeauthoryear{Edmonds \bgroup \em et al.\egroup
  }{2008}]{Edmonds2008Confidently}
Jeff Edmonds, Kirk Pruhs, and Jaisingh Solanki.
\newblock Confidently cutting a cake into approximately fair pieces.
\newblock {\em Algorithmic Aspects in Information and Management}, pages
  155--164, 2008.

\bibitem[\protect\citeauthoryear{Elkind \bgroup \em et al.\egroup
  }{2021}]{ElkindSeSu21}
Edith Elkind, Erel Segal-Halevi, and Warut Suksompong.
\newblock Mind the gap: Cake cutting with separation.
\newblock In {\em Proc. AAAI'21}, 2021.
\newblock Forthcoming. Available at arXiv:2012.06682.

\bibitem[\protect\citeauthoryear{Eppstein}{2010}]{Eppstein2010GraphTheoretic}
David Eppstein.
\newblock Graph-theoretic solutions to computational geometry problems.
\newblock In Christophe Paul and Michel Habib, editors, {\em Graph-Theoretic
  Concepts in Computer Science}, volume 5911 of {\em Lecture Notes in Computer
  Science}, pages 1--16. Springer Berlin Heidelberg, 2010.

\bibitem[\protect\citeauthoryear{Even and Paz}{1984}]{Even1984Note}
Shimon Even and Azaria Paz.
\newblock A note on cake cutting.
\newblock {\em Discrete Applied Mathematics}, 7(3):285--296, March 1984.

\bibitem[\protect\citeauthoryear{Friedman \bgroup \em et al.\egroup
  }{2015}]{Friedman2015Dynamic}
Eric Friedman, Christos~A. Psomas, and Shai Vardi.
\newblock Dynamic fair division with minimal disruptions.
\newblock In {\em Proc. EC'15}, pages 697--713, New York, NY, USA, 2015. ACM.
\newblock Extended version available at
  http://www.shaivardi.com/research/DFRD.pdf.

\bibitem[\protect\citeauthoryear{Friedman \bgroup \em et al.\egroup
  }{2017}]{friedman2017controlled}
Eric Friedman, Christos-Alexandros Psomas, and Shai Vardi.
\newblock Controlled dynamic fair division.
\newblock In {\em Proc. EC'17}, pages 461--478, 2017.

\bibitem[\protect\citeauthoryear{Har-Peled}{2021}]{HarPeled2021Partitioning}
Sariel Har-Peled.
\newblock Partitioning a connected polygon into connected pieces of equal area.
\newblock Theoretical Computer Science Stack Exchange, 2021.
\newblock URL:https://cstheory.stackexchange.com/q/48463 (version: 2021-02-22).

\bibitem[\protect\citeauthoryear{Heydrich and van
  Stee}{2015}]{heydrich2015dividing}
Sandy Heydrich and Rob van Stee.
\newblock Dividing connected chores fairly.
\newblock {\em Theoretical Computer Science}, 593:51--61, 2015.

\bibitem[\protect\citeauthoryear{Hill}{1983}]{Hill1983Determining}
Theodore~P. Hill.
\newblock Determining a fair border.
\newblock {\em Amer. Math. Monthly}, 90(7):438--442, August 1983.

\bibitem[\protect\citeauthoryear{Hoffman}{2013}]{Hoffman2013Why}
Matthew Hoffman.
\newblock Why community ownership? understanding land reform in {Scotland}.
\newblock {\em Land Use Policy}, 31:289--297, March 2013.

\bibitem[\protect\citeauthoryear{Hohenwarter \bgroup \em et al.\egroup
  }{2013}]{GeoGebra5}
M.~Hohenwarter, M.~Borcherds, G.~Ancsin, B.~Bencze, M.~Blossier, A.~Delobelle,
  C.~Denizet, J.~\'{E}li\'{a}s, \'{A}. Fekete, L.~G\'{a}l, Z.~Kone\v{c}n\'{y},
  Z.~Kov\'{a}cs, S.~Lizelfelner, B.~Parisse, and G.~Sturr.
\newblock {G}eo{G}ebra 5.0, December 2013.
\newblock http://www.geogebra.org.

\bibitem[\protect\citeauthoryear{Hosseini and Searns}{2020}]{hosseini2020mms}
Hadi Hosseini and Andrew Searns.
\newblock Guaranteeing maximin shares: Some agents left behind.
\newblock {\em Manuscript}, 2020.

\bibitem[\protect\citeauthoryear{Huo \bgroup \em et al.\egroup
  }{2020}]{huo2020incremental}
ZhiSheng Huo, Limin Xiao, Minyi Guo, and Xiaoling Rong.
\newblock Incremental throughput allocation of heterogeneous storage with no
  disruptions in dynamic setting.
\newblock {\em IEEE Computer Architecture Letters}, 69(05):679--698, 2020.

\bibitem[\protect\citeauthoryear{Ichiishi and
  Idzik}{1999}]{Ichiishi1999Equitable}
Tatsuro Ichiishi and Adam Idzik.
\newblock Equitable allocation of divisible goods.
\newblock {\em Journal of Mathematical Economics}, 32(4):389--400, December
  1999.

\bibitem[\protect\citeauthoryear{Iyer and Huhns}{2009}]{Iyer2009Procedure}
Karthik Iyer and Michael~N. Huhns.
\newblock A procedure for the allocation of two-dimensional resources in a
  multiagent system.
\newblock {\em International Journal of Cooperative Information Systems},
  18:1--34, 2009.

\bibitem[\protect\citeauthoryear{{JustPassingBy}}{2016}]{Justpassingby2016Connectivity}
{JustPassingBy}.
\newblock A connectivity-preserving function from a connected set onto an
  interval.
\newblock Mathematics Stack Exchange, 2016.
\newblock URL:https://math.stackexchange.com/q/1641485 (version: 2016-02-09).

\bibitem[\protect\citeauthoryear{Kash \bgroup \em et al.\egroup
  }{2014}]{Kash2014No}
Ian~Kash, Ariel~D. Procaccia, and Nisarg~Shah.
\newblock No agent left behind: Dynamic fair division of multiple resources.
\newblock {\em Journal of Artificial Intelligence Research}, 51:579--603, 2014.

\bibitem[\protect\citeauthoryear{Keil}{2000}]{Keil2000Polygon}
J.~Mark Keil.
\newblock Polygon decomposition.
\newblock In {\em Handbook of Computational Geometry}, pages 491--518.
  University of Saskatchewan Saskatoon, Sask., Canada, 2000.

\bibitem[\protect\citeauthoryear{Kurz}{2016}]{kurz2016price}
Sascha Kurz.
\newblock The price of fairness for a small number of indivisible items.
\newblock In {\em Operations Research Proceedings 2014}, pages 335--340.
  Springer, 2016.

\bibitem[\protect\citeauthoryear{Lalonde}{2002}]{lalonde2002determining}
Suzanne~N. Lalonde.
\newblock {\em Determining boundaries in a conflicted world: the role of Uti
  Possidetis}.
\newblock McGill-Queen's Press-MQUP, 2002.

\bibitem[\protect\citeauthoryear{Lipton}{2009}]{Lipton2009Land}
Michael Lipton.
\newblock {\em Land Reform in Developing Countries: Property Rights and
  Property Wrongs}.
\newblock Routledge, August 2009.

\bibitem[\protect\citeauthoryear{MacInnes and Shields}{2015}]{MacInnes2015Land}
Megan MacInnes and Kirsteen Shields.
\newblock The {Land Reform (Scotland)} bill and human rights: Key points and
  recommendations.
\newblock {\em SSRN Electronic Journal}, 2015.

\bibitem[\protect\citeauthoryear{McGlaughlin and
  Garg}{2020}]{mcglaughlin2020improving}
Peter McGlaughlin and Jugal Garg.
\newblock Improving nash social welfare approximations.
\newblock {\em Journal of Artificial Intelligence Research}, 68:225--245, 2020.

\bibitem[\protect\citeauthoryear{Michorzewski \bgroup \em et al.\egroup
  }{2020}]{michorzewski2020price}
Marcin Michorzewski, Dominik Peters, and Piotr Skowron.
\newblock Price of fairness in budget division and probabilistic social choice.
\newblock In {\em Proceedings of the AAAI Conference on Artificial
  Intelligence}, pages 2184--2191, 2020.

\bibitem[\protect\citeauthoryear{Mohammadi and
  Soleimani-damaneh}{2017}]{Mohammadi2017Reconstruction}
Ashkan Mohammadi and Majid Soleimani-damaneh.
\newblock Reconstruction of the core convex topology and its applications in
  vector optimization and convex analysis, April 2017.
\newblock arXiv preprint 1704.06932.

\bibitem[\protect\citeauthoryear{Moulin}{1990}]{Moulin1990Fair}
Herve Moulin.
\newblock Fair division under joint ownership: Recent results and open
  problems.
\newblock {\em Soc. Choice Welf.}, 7(2):149--170, 1990.

\bibitem[\protect\citeauthoryear{Moulin}{2004}]{Moulin2004Fair}
Herv\'{e} Moulin.
\newblock {\em Fair Division and Collective Welfare}.
\newblock The MIT Press, August 2004.

\bibitem[\protect\citeauthoryear{Nicosia \bgroup \em et al.\egroup
  }{2017}]{nicosia2017price}
Gaia Nicosia, Andrea Pacifici, and Ulrich Pferschy.
\newblock Price of fairness for allocating a bounded resource.
\newblock {\em European Journal of Operational Research}, 257(3):933--943,
  2017.

\bibitem[\protect\citeauthoryear{Nyman \bgroup \em et al.\egroup
  }{2020}]{nyman2020fair}
Kathryn Nyman, Francis~Edward Su, and Shira Zerbib.
\newblock Fair division with multiple pieces.
\newblock {\em Discrete Applied Mathematics}, 283(1):115--122, 2020.

\bibitem[\protect\citeauthoryear{Ortega}{2018}]{ortega2018social}
Josu{\'e} Ortega.
\newblock Social integration in two-sided matching markets.
\newblock {\em Journal of Mathematical Economics}, 78:119--126, 2018.

\bibitem[\protect\citeauthoryear{Powelson}{1988}]{Powelson1988Story}
John~P. Powelson.
\newblock {\em The Story of Land: A World History of Land Tenure and Agrarian
  Reform}.
\newblock Lincoln Institute of Land Policy, May 1988.

\bibitem[\protect\citeauthoryear{Robertson and
  Webb}{1998}]{Robertson1998CakeCutting}
Jack~M. Robertson and William~A. Webb.
\newblock {\em Cake-Cutting Algorithms: Be Fair if You Can}.
\newblock A K Peters/CRC Press, first edition, July 1998.

\bibitem[\protect\citeauthoryear{Rosset \bgroup \em et al.\egroup
  }{2006}]{2006Promised}
Peter Rosset, Raj Patel, and Michael Courville, editors.
\newblock {\em Promised Land: Competing Visions of Agrarian Reform}.
\newblock Food First Books, first trade paper edition, November 2006.

\bibitem[\protect\citeauthoryear{Searns}{2020}]{searns2020rethinking}
Andrew Searns.
\newblock Rethinking resource allocation: Fairness and computability.
\newblock Master's thesis, Rochester Institute of Technology, 12 2020.
\newblock Available at https://scholarworks.rit.edu/theses/10651/.

\bibitem[\protect\citeauthoryear{Segal-Halevi \bgroup \em et al.\egroup
  }{2017}]{SegalHalevi2017Fair}
Erel Segal-Halevi, Shmuel Nitzan, Avinatan Hassidim, and Yonatan Aumann.
\newblock Fair and square: Cake-cutting in two dimensions.
\newblock {\em Journal of Mathematical Economics}, 70:1--28, February 2017.

\bibitem[\protect\citeauthoryear{Segal-Halevi \bgroup \em et al.\egroup
  }{2020}]{segal2020envy}
Erel Segal-Halevi, Shmuel Nitzan, Avinatan Hassidim, and Yonatan Aumann.
\newblock Envy-free division of land.
\newblock {\em Mathematics of Operations Research}, 45(3):896--922, 2020.

\bibitem[\protect\citeauthoryear{Segal-Halevi}{2017}]{SegalHalevi2017Phd}
Erel Segal-Halevi.
\newblock {\em Fair Division of Land}.
\newblock PhD thesis, Bar Ilan University, Computer Science Department, 2017.
\newblock Guided by Yonatan Aumann and Avinatan Hassidim.

\bibitem[\protect\citeauthoryear{Segal-Halevi}{2018}]{segalhalevi2018redividing}
Erel Segal-Halevi.
\newblock Redividing the cake.
\newblock In {\em Proc. IJCAI'18}, pages 498--504. AAAI Press, 2018.

\bibitem[\protect\citeauthoryear{Segal-Halevi}{2019}]{SegalHalevi2018Entitlements}
Erel Segal-Halevi.
\newblock Cake-cutting with different entitlements: How many cuts are needed?
\newblock {\em Journal of Mathematical Analysis and Applications},
  480(1):123382, 2019.
\newblock arXiv preprint 1803.05470.

\bibitem[\protect\citeauthoryear{Segal-Halevi}{2021}]{segalhalevi2020multicake}
Erel Segal-Halevi.
\newblock Fair multi-cake cutting.
\newblock {\em Discrete Applied Mathematics}, 291:15--35, 2021.
\newblock arXiv preprint 1812.08150.

\bibitem[\protect\citeauthoryear{Sellar}{2006}]{sellar2006great}
W.~D.~H. Sellar.
\newblock The great land debate and the {Land Reform (Scotland)} act 2003.
\newblock {\em Norsk Geografisk Tidsskrift-Norwegian Journal of Geography},
  60(1):100--109, 2006.

\bibitem[\protect\citeauthoryear{Shtechman \bgroup \em et al.\egroup
  }{2021}]{shtechman2020fair}
Itay Shtechman, Rica Gonen, and Erel Segal-Halevi.
\newblock Fair cake-cutting algorithms with real land-value data.
\newblock {\em Autonomous Agents and Multi-Agent Systems}, 35(39), 2021.

\bibitem[\protect\citeauthoryear{Steinhaus}{1948}]{Steinhaus1948Problem}
Hugo Steinhaus.
\newblock The problem of fair division.
\newblock {\em Econometrica}, 16(1):101--104, jan 1948.

\bibitem[\protect\citeauthoryear{Suksompong}{2019}]{suksompong2019fairly}
W~Suksompong.
\newblock Fairly allocating contiguous blocks of indivisible items.
\newblock {\em Discrete Applied Mathematics}, 260:227--236, 2019.


\bibitem[\protect\citeauthoryear{Sziklai and
Segal-Halevi}{2018}]{segal2018resource}
Bal{\'a}zs~R Sziklai and Erel Segal-Halevi.
\newblock Resource-monotonicity and population-monotonicity in connected
  cake-cutting.
\newblock {\em Mathematical Social Sciences}, 95:19--30, 2018.

\bibitem[\protect\citeauthoryear{Sziklai and
Segal-Halevi}{2019}]{segalhalevi2018monotonicity}
Bal{\'a}zs~R Sziklai and Erel Segal-Halevi.
\newblock Monotonicity and competitive equilibrium in cake-cutting.
\newblock {\em Economic Theory}, 68(2):363--401, Sep 2019.


\bibitem[\protect\citeauthoryear{Tang \bgroup \em et al.\egroup
  }{2020}]{tang2020price}
Zhongzheng Tang, Chenhao Wang, and Mengqi Zhang.
\newblock Price of fairness in budget division for egalitarian social welfare.
\newblock In {\em International Conference on Combinatorial Optimization and
  Applications}, pages 594--607. Springer, 2020.

\bibitem[\protect\citeauthoryear{Thomson}{1983}]{Thomson1983Fair}
William Thomson.
\newblock The fair division of a fixed supply among a growing population.
\newblock {\em Mathematics of Operations Research}, 8(3):319--326, August 1983.

\bibitem[\protect\citeauthoryear{Thomson}{2007}]{Thomson2007Children}
William Thomson.
\newblock Children crying at birthday parties. why?
\newblock {\em Economic Theory}, 31(3):501--521, June 2007.

\bibitem[\protect\citeauthoryear{Thomson}{2011}]{Thomson2011Fair}
William Thomson.
\newblock {\em Fair Allocation Rules}, volume~2, pages 393--506.
\newblock Elsevier, 2011.

\bibitem[\protect\citeauthoryear{Walsh}{2011}]{Walsh2011Online}
Toby Walsh.
\newblock Online cake cutting.
\newblock {\em Algorithmic Decision Theory}, 6992:292--305, 2011.

\bibitem[\protect\citeauthoryear{Wightman}{2015}]{Wightman2015Poor}
Andy Wightman.
\newblock {\em The Poor Had No Lawyers: Who Owns {Scotland} (and How They Got
  it)}.
\newblock Birlinn Ltd, August 2015.

\bibitem[\protect\citeauthoryear{Woeginger and
  Sgall}{2007}]{Woeginger2007Complexity}
Gerhard~J. Woeginger and Ji\v{r}\'{\i} Sgall.
\newblock On the complexity of cake cutting.
\newblock {\em Discrete Optimization}, 4(2):213--220, June 2007.

\bibitem[\protect\citeauthoryear{Yagami}{2021}]{Yagami2021Computing}
Inuyasha Yagami.
\newblock Computing the set of points nearest to a point in a polygon boundary.
\newblock Computer Science Stack Exchange, 2021.
\newblock URL:https://cs.stackexchange.com/q/141012 (version: 2021-06-04).

\bibitem[\protect\citeauthoryear{Zhang \bgroup \em et al.\egroup
  }{2014}]{Zhang2014SMiTe}
Yunqi Zhang, Michael~A. Laurenzano, Jason Mars, and Lingjia Tang.
\newblock {SMiTe}: Precise {QoS} prediction on real-system {SMT} processors to
  improve utilization in warehouse scale computers.
\newblock In {\em Proc. MICRO'14}, pages 406--418, Washington, DC, USA, 2014.
  IEEE Computer Society.

\bibitem[\protect\citeauthoryear{Zhang \bgroup \em et al.\egroup
  }{2020}]{zhang2020price}
Yubai Zhang, Zhao Zhang, and Zhaohui Liu.
\newblock The price of fairness for a two-agent scheduling game minimizing
  total completion time.
\newblock {\em Journal of Combinatorial Optimization}, pages 1--19, 2020.

\bibitem[\protect\citeauthoryear{Zivan}{2011}]{Zivan2011Can}
Roie Zivan.
\newblock Can trust increase the efficiency of cake cutting algorithms{?}
\newblock In {\em Proc. AAMAS'11}, pages 1145--1146, 2011.

\end{thebibliography}

\end{document}